\documentclass[aip,jcp,a4paper,12pt,nofootinbib]{revtex4-1}

\usepackage[utf8]{inputenc}
\usepackage[T1]{fontenc}
\usepackage{mathtools}
\usepackage{amsmath,amsthm}
\usepackage{amssymb}
\usepackage{physics}
\usepackage{subcaption}

\newtheorem{thm}{Theorem}
\newtheorem{rem}{Remark}
\newtheorem{lemma}{Lemma}

\newcommand{\np}[1]{#1}
\newcommand\tiret{\nobreakdash-\hspace{0pt}}

\newcommand{\vs}[1]{{\mathsf{#1}}}
\newcommand{\vc}[1]{\boldsymbol{\mathbf{#1}}}
\newcommand{\fld}{\vc{E}}
\newcommand{\av}{\vc{a}}
\newcommand{\mv}{\vc{m}}
\newcommand{\nv}{\vc{n}}
\newcommand{\rv}{\vc{r}}
\newcommand{\rijn}{\rv_{ij\nv}}

\newcommand{\vv}{\vc{v}}
\newcommand{\Dif}{\vs{D}}

\newcommand{\ii}{{\rm i}}

\newcommand{\Li}{\vs L_i}
\newcommand{\Lj}{\vs L_j}
\newcommand{\oS}{\mathcal S}

\newcommand{\urvN}{\underbar{r}^{\{N\}}}
\newcommand{\param}{\alpha^{1/2}}

\newcommand{\N}{\mathbb{N}}
\newcommand{\Z}{\mathbb{Z}}
\newcommand{\R}{\mathbb{R}}

\DeclareMathOperator{\erfc}{erfc}
\newcommand{\Fc}{\mathcal{F}}
\newcommand\ordre{\mathcal{O}}

\newcommand{\mEwald}{\mathsf{Ewald}}
\newcommand{\mself}{\mathsf{self}}
\newcommand{\msurf}{\mathsf{surf}}

\newcommand{\Ewald}{Ewald}

\newcommand{\pme}{\textsc{pme}}

\DeclarePairedDelimiterX\Set[2]{\lbrace}{\rbrace}{ #1 \,\delimsize|\, #2 }

\usepackage[usenames,dvipsnames,svgnames]{xcolor}

\newcommand{\old}[1]{}

\newcommand{\black}{\color{black}}

\newcommand{\refO}[1]{{#1}}
\newcommand{\refT}[1]{{#1}}

\usepackage{filecontents}
\begin{filecontents}{\jobname.bib}
@Article{ewald21,
  author = {Ewald, P. P.},
  title = {Die {B}erechnung optischer und elektrostatischer {G}itterpotentiale},
  journal = {Annalen der Physik},
  volume = {369},
  number = {3},
  publisher = {WILEY-VCH Verlag},
  issn = {1521-3889},
  url = {http://dx.doi.org/10.1002/andp.19213690304},
  doi = {10.1002/andp.19213690304},
  pages = {253--287},
  year = {1921},
}
@article{redlack75,
  title = {Coulombic potential lattice sums},
  journal = {Journal of Physics and Chemistry of Solids},
  volume = {36},
  number = {2},
  pages = {73 - 82},
  year = {1975},
  issn = {0022-3697},
  doi = {http://dx.doi.org/10.1016/0022-3697(75)90116-X},
  url = {http://www.sciencedirect.com/science/article/pii/002236977590116X},
  author = {A. Redlack and J. Grindlay},
}
@article{deleeuw80,
  author = {de Leeuw, S. W. and Perram, J. W. and Smith, E. R.},
  title = {Simulation of Electrostatic Systems in Periodic Boundary Conditions. I. Lattice Sums and Dielectric Constants},
  volume = {373},
  number = {1752},
  pages = {27--56},
  year = {1980},
  doi = {10.1098/rspa.1980.0135},
  publisher = {The Royal Society},
  issn = {0080-4630},
  URL = {http://rspa.royalsocietypublishing.org/content/373/1752/27},
  eprint = {http://rspa.royalsocietypublishing.org/content/373/1752/27.full.pdf},
  journal = {Proceedings of the Royal Society of London A: Mathematical, Physical and Engineering Sciences}
}
@article{deleeuw81,
  title = {Computer simulation of ionic systems. Influence of boundary conditions},
  journal = {Physica A: Statistical Mechanics and its Applications},
  volume = {107},
  number = {1},
  pages = {179 - 189},
  year = {1981},
  issn = {0378-4371},
  doi = {http://dx.doi.org/10.1016/0378-4371(81)90031-5},
  url = {http://www.sciencedirect.com/science/article/pii/0378437181900315},
  author = {S.W. De Leeuw and J.W. Perram},
}
@article{darden93,
  author = {Tom Darden and Darrin York and Lee Pedersen},
  title = {Particle mesh Ewald: An N⋅log(N) method for Ewald sums in large systems},
  journal = {The Journal of Chemical Physics},
  volume = {98},
  number = {12},
  pages = {10089-10092},
  year = {1993},
  doi = {10.1063/1.464397},
  URL = {http://dx.doi.org/10.1063/1.464397},
  eprint = {http://dx.doi.org/10.1063/1.464397}
}
@article{essmann95,
  author = {Ulrich Essmann and Lalith Perera and Max L. Berkowitz and Tom Darden and Hsing Lee and Lee G. Pedersen},
  title = {A smooth particle mesh Ewald method},
  journal = {The Journal of Chemical Physics},
  volume = {103},
  number = {19},
  pages = {8577-8593},
  year = {1995},
  doi = {10.1063/1.470117},
  URL = {http://dx.doi.org/10.1063/1.470117},
  eprint = {http://dx.doi.org/10.1063/1.470117}}
@Article{smith98,
  author =   {W.~Smith},
  title =    {Point Multipoles in the {E}wald Summation ({R}evisited)},
  journal =  {\textsc{ccp5} Newsletter},
  year =     1998,
  volume =   46,
  pages =    {18--30},
  note =     {Reprint of a 1982 paper by the same author}}
@Article{aguado03,
  author =   {Adrés Aguado and Paul A.~Madden},
  title =    {{E}wald summation of electrostatic multipole interactions up to the quadrupolar level},
  journal =  {Journal of Chemical Physics},
  doi =      {10.1063/1.1605941},
  url=       {https://dx.doi.org/10.1063/1.1605941},
  year =     2003,
  volume =   119,
  number =   14,
  pages =    {7471--7483}}
@Article{wang05,
  author =   {Wei Wang and Robert D.~Skeel},
  title =    {Fast evaluation of polarizable forces},
  journal =  {Journal of Chemical Physics},
  doi =      {10.1063/1.2056544},
  url=       {https://dx.doi.org/10.1063/1.2056544},
  year =     2005,
  volume =   123,
  number =   16,
  pages =    {1--12},
  eid =      164107}
@Article{nymand00,
  author =   {Thomas M.~Nymand and Per Linse},
  title =    {{E}wald summation and reaction field methods for potentials with atomic charges, dipoles, and polarizabilities },
  journal =  {Journal of Chemical Physics},
  doi =      {10.1063/1.481216},
  url=       {https://dx.doi.org/10.1063/1.481216},
  year =     2000,
  volume =   112,
  number =   14,
  pages =    {6152--6160}}
@Article{toukmaji00,
  author =   {Abdulnour Yakoub Toukmaji and Celeste Sagui and John Board and Thomas A.~Darden},
  title =    {Efficient particle-mesh {E}wald based approach to fixed and induced dipolar interactions},
  journal =  {Journal of Chemical Physics},
  doi =      {10.1063/1.1324708},
  url=       {https://dx.doi.org/10.1063/1.1324708},
  year =     2000,
  volume =   113,
  number =   24,
  pages =    {10913--10927}}
@Inbook{darden08,
  author={Darden, T.~A.},
  editor={Shmueli, U.},
  title={Extensions of the {E}wald method for {C}oulomb interactions in crystals},
  bookTitle={International Tables for Crystallography Volume B: Reciprocal space},
  year = 2008,
  publisher={Wiley},
  pages={458--481},
  isbn = {978-1-4020-8205-4},
  doi={doi:10.1107/97809553602060000772},
  url={http://dx.doi.org/10.1107/97809553602060000772},
  chapter = {3.5},}
@Article{polack17,
  author =  {Polack, {\'E}tienne},
  title =   {TBA},
  journal = {TBA},
  year =    {2017},}
  @article{Lagardere2015ScalableEwald,
    title = {{Scalable Evaluation of Polarization Energy and Associated Forces in Polarizable Molecular Dynamics: II. Toward Massively Parallel Computations Using Smooth Particle Mesh Ewald}},
    year = {2015},
    journal = {Journal of Chemical Theory and Computation},
    author = {Lagard{\`{e}}re, Louis and Lipparini, Filippo and Polack, Etienne and Stamm, Benjamin and Canc{\`{e}}s, Eric and Schnieders, Michael and Ren, Pengyu and Maday, Yvon and Piquemal, Jean Philip},
    number = {6},
    pages = {2589--2599},
    volume = {11},
    doi = {10.1021/acs.jctc.5b00171},
    issn = {15499626},
    pmid = {26512230}
}

@ARTICLE{lagardere2018tinker,
  author = {L. Lagard{\`e}re and L H. Jolly and F. Lipparini and F. Aviat and B. Stamm and Z F. Jing and M. Harger and H. Torabifard and G A. Cisneros and M J. Schnieders and Nohad Gresh and  Yvon Maday and  Pengyu Y. Ren  and Jay W. Ponder  and  Jean-Philip Piquemal},
  title = {Tinker-HP: a massively parallel molecular dynamics package for multiscale simulations of large complex systems with advanced point dipole polarizable force fields},
  journal = {Chem. Sci.},
  year = {2018},
  volume = {9},
  pages = {956-972},
  number = {},
  doi = { 10.1039/C7SC04531J },
}

@article{ren2003polarizable,
	Author = {Ren, Pengyu and Ponder, Jay W},
	Date-Added = {2018-03-15 16:20:27 +0000},
	Date-Modified = {2018-03-15 16:20:27 +0000},
	Journal = {The Journal of Physical Chemistry B},
	Number = {24},
	Pages = {5933--5947},
	Publisher = {ACS Publications},
	Title = {Polarizable atomic multipole water model for molecular mechanics simulation},
	Volume = {107},
	Year = {2003}}

@article {Smith1981,
	author = {Smith, E. R.},
	title = {Electrostatic energy in ionic crystals},
	volume = {375},
	number = {1763},
	pages = {475--505},
	year = {1981},
	doi = {10.1098/rspa.1981.0064},
	publisher = {The Royal Society},
	issn = {0080-4630},
	URL = {http://rspa.royalsocietypublishing.org/content/375/1763/475},
	eprint = {http://rspa.royalsocietypublishing.org/content/375/1763/475.full.pdf},
	journal = {Proceedings of the Royal Society of London A: Mathematical, Physical and Engineering Sciences}}
	
@Article{Smith1994,
author="Smith, E. R.",
title="Calculating the pressure in simulations using periodic boundary conditions",
journal="Journal of Statistical Physics",
year="1994",
month="Oct",
day="01",
volume="77",
number="1",
pages="449--472",
issn="1572-9613",
doi="10.1007/BF02186852",
url="https://doi.org/10.1007/BF02186852"}

@article{laino2008notes,
  title={Notes on “Ewald summation of electrostatic multipole interactions up to quadrupolar level”[J. Chem. Phys. 119, 7471 (2003)]},
  author={Laino, Teodoro and Hutter, J{\"u}rg},
  journal={The Journal of chemical physics},
  volume={129},
  number={7},
  pages={074102},
  year={2008},
  publisher={AIP}
}

@article{sala2010polarizable,
  title={The polarizable point dipoles method with electrostatic damping: Implementation on a model system},
  author={Sala, Jon{\`a}s and Gu{\`a}rdia, Elvira and Masia, Marco},
  journal={The Journal of chemical physics},
  volume={133},
  number={23},
  pages={234101},
  year={2010},
  publisher={AIP}
}

@article{stenhammar2011some,
  title={Some comments and corrections regarding the calculation of electrostatic potential derivatives using the Ewald summation technique},
  author={Stenhammar, Joakim and Trulsson, Martin and Linse, Per},
  journal={The Journal of chemical physics},
  volume={134},
  number={22},
  pages={224104},
  year={2011},
  publisher={AIP}
}

@article{weenk1977calculation,
  title={Calculation of electrostatic fields in ionic crystals based upon the Ewald method},
  author={Weenk, JW and Harwig, HA},
  journal={Journal of Physics and Chemistry of Solids},
  volume={38},
  number={9},
  pages={1047--1054},
  year={1977},
  publisher={Elsevier}
}

@article{de1980simulation,
  title={Simulation of electrostatic systems in periodic boundary conditions. I. Lattice sums and dielectric constants},
  author={de Leeuw, Simon W and Perram, John William and Smith, Edgar Roderick},
  journal={Proc. R. Soc. Lond. A},
  volume={373},
  number={1752},
  pages={27--56},
  year={1980},
  publisher={The Royal Society}
}

@article{Giese2015,
author = {Giese, Timothy J. and Panteva, Maria T. and Chen, Haoyuan and York, Darrin M.},
title = {Multipolar Ewald Methods, 1: Theory, Accuracy, and Performance},
journal = {Journal of Chemical Theory and Computation},
volume = {11},
number = {2},
pages = {436-450},
year = {2015},
doi = {10.1021/ct5007983}
}

@article{Simmonett2014,
author = {Simmonett,Andrew C.  and Pickard,Frank C.  and Schaefer,Henry F.  and Brooks,Bernard R. },
title = {An efficient algorithm for multipole energies and derivatives based on spherical harmonics and extensions to particle mesh Ewald},
journal = {The Journal of Chemical Physics},
volume = {140},
number = {18},
pages = {184101},
year = {2014},
doi = {10.1063/1.4873920},
}

\end{filecontents}

\begin{document}
\title{A coherent derivation of the \Ewald\ summation for arbitrary orders of multipoles:
The self-terms}
%\title{A complete and rigorous derivation of the \Ewald\ summation for arbitrary orders of multipoles}
%\title{Self terms in the \Ewald summation for any multipoles}
\author{Benjamin Stamm}
\affiliation{Center for Computational Engineering Science, RWTH Aachen University, Aachen,
  Germany}
\author{Louis Lagardère}
\affiliation{Sorbonne Universit\'e, Institut des Sciences du Calcul et des Donn\'ees, Paris, France}
\affiliation{Sorbonne Universit\'e, Institut Parisien de Chimie Physique et Th\'eorique, FR 2622 CNRS, Paris, France}
\affiliation{Sorbonne Universit\'e, Laboratoire de Chimie Th\'eorique, UMR 7616 CNRS, Paris, France}
\author{Étienne Polack}
\affiliation{Sorbonne Universit\'e, Laboratoire de Chimie Th\'eorique, UMR 7616 CNRS, Paris, France}
\affiliation{Sorbonne Université, Université Paris-Diderot SPC, CNRS, Laboratoire Jacques-Louis Lions, LJLL, F-75005 Paris}
%\author{Thomas~A.~Darden}
%\affiliation{OpenEye Scientific Software, 9 Bisbee Court, Suite D, Santa Fe, New Mexico
%  87508, United States}
\author{Yvon Maday}
\affiliation{Sorbonne Université, Université Paris-Diderot SPC, CNRS, Laboratoire Jacques-Louis Lions, LJLL, F-75005 Paris}
\affiliation{Institut Universitaire de France, Paris, France}
\affiliation{Brown Univ, Division of Applied Maths, Providence, RI, USA}
\author{Jean-Philip Piquemal}
\affiliation{Sorbonne Universit\'e, Laboratoire de Chimie Th\'eorique, UMR 7616 CNRS, Paris, France}
\affiliation{Institut Universitaire de France, Paris, France}
\affiliation{The University of Texas at Austin, Department of Biomedical Engineering, TX, USA}
\begin{abstract}
  In this work, we provide the mathematical elements we think essential for a proper
  understanding of the calculus of the electrostatic energy of point-multipoles
  of arbitrary order under periodic boundary conditions.
  The emphasis is put on the expressions of the so-called self parts of the \Ewald\,
  summation where different expressions can be found in literature.
 Indeed, such expressions are of prime importance in the context of new generation polarizable force field where the self field appears in the polarization equations.
  We provide a general framework, where the idea of the \Ewald\ splitting is applied to the
  electric potential and subsequently, all other quantities such as the electric field, the
  energy and the forces are derived consistently thereof.
  Mathematical well-posedness is shown for all these contributions for any order of multipolar distribution.

\end{abstract}

\maketitle

\section*{Introduction}

%\subparagraph{Generalities on the \Ewald\ summation.}

The computation of physical quantities involving the \np{Coulomb} potential is a challenging
issue due to the slow decay of the interacting kernel as the inverse of the distance.
This long-range potential often prevents the use of simple techniques like cutoffs methods
that only take into account short-range interactions.
This problem has been addressed with the use of hierarchical methods (of order
\( \ordre(N) \) or \( \ordre(N\log N) \) complexity) that approximate the long-range
interactions and \np{Fourier} (of order \( \ordre(N\log N) \)) methods that compute part of
the \np{Coulomb} interaction in the dual space by considering the physical
system under periodic boundary conditions.

For molecular dynamics simulations of biological systems, the most widely used method is a
\np{Fourier} method, the particle-mesh \Ewald\cite{darden93, essmann95} --- or shortly \pme.
This method is based on the \Ewald\ summation\cite{ewald21}, which gives a well-posed
definition for the energy of the system.
This is indeed not granted at all, since the energy is not well defined due to the
conditional convergence of the involved series of the infinite periodic system if the
(neutral) unit cell has a non-zero dipolar moment.
In this case, different orders of summation provide different energies.

\subparagraph{Background on the \Ewald\ summation.}

The mathematical derivation of the \Ewald\ energy summation for point charges in three
dimensions was carried out by~\citet{redlack75, deleeuw80}.
With respect to the focus of this paper involving multipoles of any order,
\citet{ weenk1977calculation} and \citet{smith98} gave expressions for the energy using \Ewald\ summation for
density of charges expressed as a sum of multipoles up to quadrupoles.
Those expressions have been used, for example, in the works of of~\citet{nymand00,
toukmaji00, wang05} for dipoles and by~\citet{aguado03} for quadrupoles.

However some expressions in the paper by~\citet{smith98} are justified using physical
insight, and only the \Ewald\ energies and forces are given.
We think this is the reason why some other authors use other (inconsistent) expressions.
For example~\citet{nymand00} give an expression for the electric field that is different
from the one by~\citet{toukmaji00}.
This difference was then discussed by \citet{laino2008notes} and corrected in \citet{stenhammar2011some}.

Moreover all the terms (potential, field, energy, forces) for the \Ewald\ summation are in
our knowledge never presented all together in one place consistently, and the derivation is
seldom explained.
For example, \citet{aguado03, wang05} don't give expression for the field
and~\citet{toukmaji00} give no expression for the potential.
\citet{stenhammar2011some} builds an exception, however, the proposed self-energy differs for \refO{quadrupolar} distributions as the work by \citet{stenhammar2011some} does not include a quadruple-\refO{quadrupol} interaction whereas \citet{aguado03} \refO{does. The latter is however with a different formula than what we propose later in this work.}
This may be explained by a missing double factorial in \citet{aguado03} and \citet{nymand00} that was pointed out by \citet{laino2008notes}.
As only the net expressions are provided, it is difficult to trace back this difference.
\refT{
Recent developments have been made for efficient PME calculations using spherical harmonic point multipoles in \citet{Giese2015} and \citet{Simmonett2014}, where in particular the former also provides expressions for  energies, potentials, and forces using arbitrary order point multipoles.
}

\subparagraph{Contribution.}

This paper should be seen as an extension of the work of \citet{smith98}.
Although not fully rigorous and lead by physical intuition, his reasoning for the expression
of the self-energy can be proven with the use of some mathematical arguments, which can then
be used to find the self-terms of any multipolar distribution.
While we do not introduce a new theory, model or mathematical expressions, we introduce here
a coherent mathematical framework to derive the self-terms of multipolar distributions of
any order for the electric potential and field as well as the associated energy and forces and confirm the results proposed by \citet{smith98}.
Further, we present proofs of the  well-posedness of the self parts to the energy, electric potential and field for multipolar distributions of any order.

Our derivation is different from what has been proposed in the past, and emphasizes that the
\Ewald\ splitting should first be done on the potential or the field --- and not directly on
the energy.
We derive the self-potential and self-field from scratch using \Ewald\ splitting and deduce
from those expressions the results for the self-energy and self-forces.

The purpose of the present article is to provide a coherent mathematically driven
derivation of all self-terms, which, in consequence, provides a base for methodology
developments of force-fields.
We present in the appendix of a complete and precise derivation of all self-terms such that differences in expressions as highlighted above can be traced back.
In particular, and this is our main motivation, a correct derivation of the self-field is
indispensable for polarizable force-fields.
Indeed, to solve the polarization equation, the total field, and thus the self-field, is
required to compute the polarization field\cite{Lagardere2015ScalableEwald}.
In practice, such terms are well implemented in production codes like Tinker and Amber. However, other codes exist and omitting these terms would result in highly different properties. Indeed, as shown in Figure~\ref{fig:oxygen}, omitting the self-field in the computation of the polarization energy results in highly different oxygen-oxygen radial distribution function. 
% In practice, in a production code \cite{lagardere2018tinker}, omitting these terms result in highly different observable, see figure~\ref{fig:oxygen} for the example of  the oxygen-oxygen radial distribution function, obtained from a 1ns NVT simulation at 300K using the AMOEBA force field \cite{ren2003polarizable} and Tinker-HP (4000 water molecules within a $49.323^3$ Angstroms square box). 
Therefore, it is of prime importance for developers to have a robust justification of the expression to implement.
This is in contrast to non-polarizable force-fields where only the energy and forces are
needed to derive a correct dynamics.

\begin{figure}[htbp!]
  \centering
%  \subcaptionbox{Small system of water consisting of X water molecules}{%
%    \includegraphics[width=0.48\textwidth]{watersmall.png}
%  }
%  \subcaptionbox{System of water consisting of \red X \black  water molecules}{%
    \includegraphics[width=0.48\textwidth]{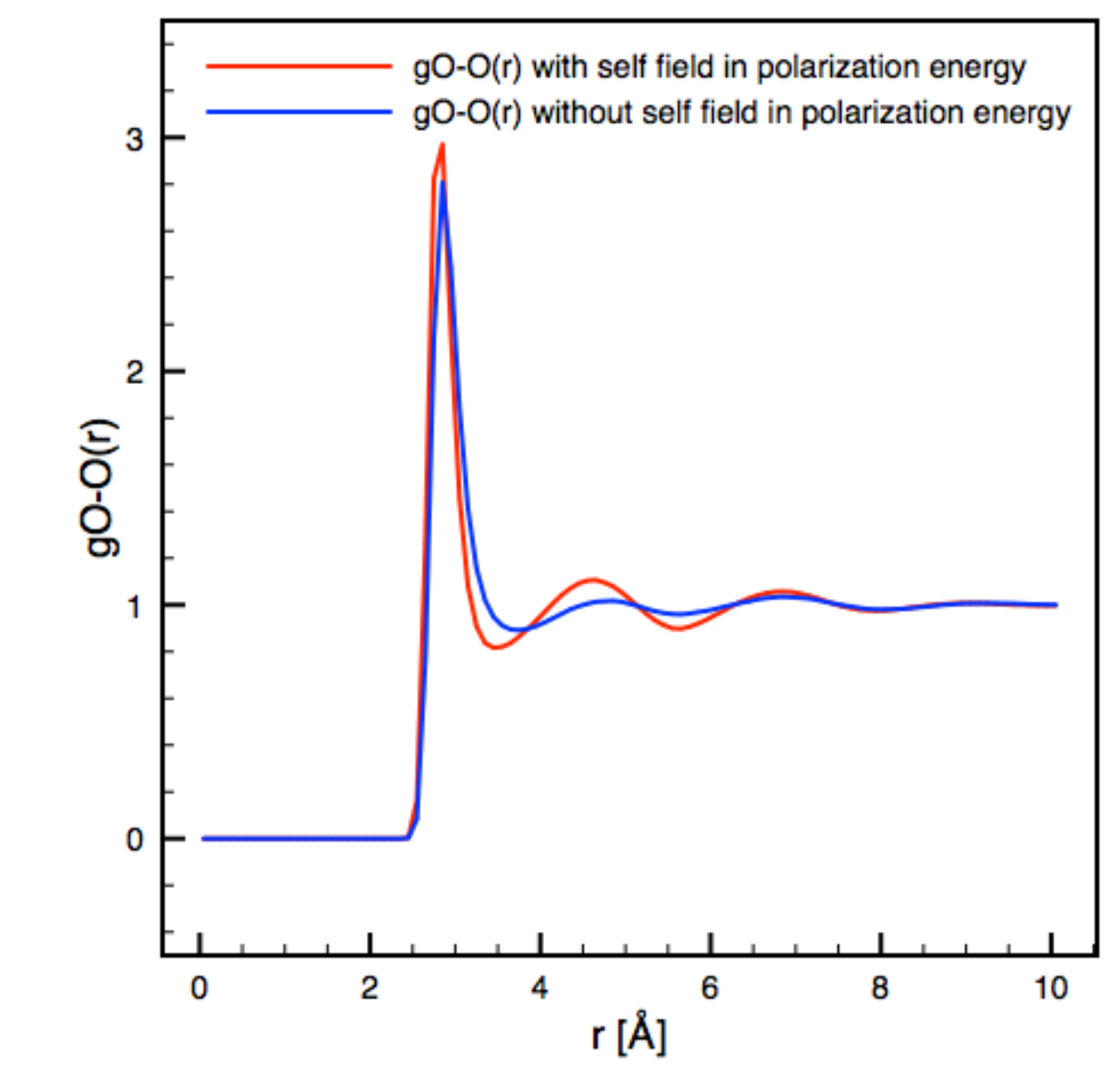}
%  }
  \caption{
  Computational experiment performed with the Tinker-HP \cite{lagardere2018tinker} software and the AMOEBA force field. Removing the self-field terms in the computation of the polarization energy gives rise to strong differences in the oxygen-oxygen radial distribution function compared to the correct Tinker-HP initial implementation. Simulation settings: 1ns NVT simulation at 300K, 4000 water molecules within a $49.323^3$ Angstrom square box.
  %Oxygen-oxygen radial distribution function, obtained from a 1ns NVT simulation at 300K (see text). Without the self terms, the agreement with experiment is lost.
  }\label{fig:oxygen}
\end{figure}

\subparagraph{Outline.}

First, in Section~\ref{sec:ewald-mult} we introduce the notations that we use and review
general results about the \Ewald\ summation.
In Section~\ref{sec:derivation-pot}, Section~\ref{sec:derivation-field} and
Section~\ref{sec:derivation-energy} we give, respectively, a derivation for the potential,
the field and the energy using \Ewald\ summation.
Finally, in Section~\ref{sec:self} we give explicit expressions of the self-terms and provide the proof that justifies the
existence of the self-terms for any multipolar distribution.

\section{\Ewald\ summation for multipoles}\label{sec:ewald-mult}

In this article, we consider a system composed of a discrete distribution of \( N \) point
multipoles  in \( \R^3 \) under periodic boundary conditions.
The system consists in an electronically neutral primitive triclinic cell \( U \) with
charges in form of multipoles located at \( \rv_{i} \in U\) for \( i \in \{1, \ldots, N\} \).
The set of positions \( \rv_{i} \) is represented by the global vector
\( \urvN \coloneqq (\rv_{1},\ldots, \rv_{N})\).
The unit cell \( U \) is then duplicated in all directions and the system derived from \( \urvN \) is
therefore composed of an infinity of charges.

The unit cell \( U \) is spanned by the three vectors \( (\av_1, \av_2, \av_3) \) which is
called the basis of \( U \).
We then introduce the lattice-indices \( \nv \) and \( \mv \) of the form
\begin{equation}
  \label{eq:nm_index}
  \nv = \sum_{1\leq \gamma\leq 3} n_{\gamma} \, \av_{\gamma} \quad \text{and} \quad
  \mv = \sum_{1\leq\gamma\leq 3} m_{\gamma} \, \av_{\gamma}^{*},
\end{equation}
where \( n_{\gamma}, m_{\gamma} \in \N \) and \( (\av_1^*, \av_2^*, \av_3^*) \) is the dual basis of
\( (\av_1, \av_2, \av_3) \): that is
\( \av_\gamma^* \cdot \av_{\gamma^{\prime}} = \delta_{\gamma\gamma^{\prime}} \) (the
\np{Kronecker} symbol).
We will also denote by \( V \) the volume of the primitive cell \( U \) and by \( U^* \) the dual of the primitive cell.

Then, one can informally introduce ``the'' electrostatic interaction energy of \( \urvN \) up to \( 2^{p} \)\tiret{}poles,
\( p \in \N \)  as
\begin{equation}
  \label{eq:multipoles}
  \mathcal{E}(\urvN)
  \coloneqq \frac{1}{2} \sideset{}{^{\prime}}\sum_{\substack{\nv \\ 1\leq i,j\leq N}}
  \Li \, \Lj  \, \frac{1}{\abs{\rijn}},
\end{equation}
where \( \rijn \coloneqq \rv_{i} - \rv_{j} + \nv \), the sign \( {}^{\prime} \) on the sum means that for \( i = j \) when \( \nv = 0 \) the
interaction is not counted (this avoids self-interaction of a point multipole with itself) and the multipolar operator
\( \Li \) is defined as
\begin{equation}
  \Li \coloneqq \sum_{0\leq k \leq p} \vs M^{k}_{i} \cdot \Dif^{k}_{i}.
\end{equation}
Here, \( \vs M^{k}_{i} \) is a \( k \)\tiret{}dimensional array of dimension
\( 3^{k} \) describing moment of the point \( 2^{k} \)\tiret{}pole, \( \Dif^{k}_{i} \) is the
matrix of \( k \)\tiret{}order partial derivatives with respect to the variable \( \rv_i \)
and \( \cdot \) is the point-wise product which writes 
\[
	(\vs A \cdot \vs B)_\alpha 
	= \vs A_\alpha \, \vs B_\alpha 
	= \vs A_{\alpha_1,\ldots,\alpha_k} \, \vs B_{\alpha_1,\ldots,\alpha_k},
\]
for two arbitrary $k$-dimensional arrays $\vs A, \vs B \in \mathbb R^{3k}$ and where $\alpha=(\alpha_1,\ldots,\alpha_k)$, $\alpha_i\in \{1,2,3\}$, is a $k$-dimensional multi-index.

For instance, \( k=0 \) represents a point-charge of charge \( \vs M^{0}_{i} \) at \( \rv_i \) and \( k=1 \) a dipole where \( \Dif^{1}_{i} \) is equivalent to the usual \( \nabla \) notation
with respect to \( \rv_i \) and \( \vs M^{1}_{i} \) denotes the dipolar moment for each
location  \( \rv_i \).
Next, \( k=2 \) represents a quadrupole, \( \Dif^{2}_{i} \) denotes the Hessian matrix and \( \vs M^{2}_{i} \) is
a \( 3\times 3 \) matrix that incorporates the quadrupolar moments.

We will see that the energy in \eqref{eq:multipoles} is actually {\it not well defined}: As in the case of single point charges, it can be shown, by a \np{Taylor} expansion with
  respect to \( \nv \), that the series in equation~\eqref{eq:multipoles} is what is called
  conditionally convergent.
  That implies that the result of the energy \( \mathcal{E}(\urvN) \) depends on the order of
  summation and is thus not uniquely defined.

The electrostatic energy can equivalently be stated in the following form
\begin{equation}\label{eq:en_fld}
  \mathcal{E}(\urvN) = \frac{1}{2} \sum_{1\leq i \leq N} \Li  \phi^{i}(\rv_{i})
  \qquad \text{with} \qquad
  \phi^{i}(\rv_{i}) \coloneqq \sideset{}{^{\prime}}\sum_{\substack{\nv\\1\leq j\leq N}}
  \Lj \frac{1}{\abs{\rijn}},
\end{equation}
so that \( \phi^{i}(\rv_{i}) \) denotes the potential at \( \rv_i\) which is
generated by all multipoles different than the one located at \( \rv_i \).
In consequence, equation~\eqref{eq:multipoles} represents indeed the interaction energy between
every multipole \( i \) in the unit cell with the potential created by all other multipoles
(indexed by \( j \) and \( \nv \)) of the infinite lattice.

Let us make a subtle comment. While $\rv_i$ is the fixed position of the $i$-th multipole, the multipole operator $\Li$ involves derivatives which requires to consider the potential $\phi^i$ in a local neighborhood of $\rv_i$. We denote therefore by $\rv$ the variable belonging to a local neighborhood of $\rv_i$ and write
\begin{equation}
	\label{eq:Equiv}
	\Li  \phi^i(\rv_i)
	 =
	\vs M^{k}_{i} \cdot \big(  \Dif^{k}_{\rv } \phi^i(\rv)\big)\big|_{\rv=\rv_i},
\end{equation}
since we have to consider the potential $\phi^i(\rv)$ and its derivatives ultimately evaluated at $\rv=\rv_i$.

As anticipated above, equations~\eqref{eq:multipoles} and \eqref{eq:en_fld} are not well-defined and hence the need to use a definition of an expression for the energy that is well defined.
One possible remedy is the introduction of the \Ewald\ energy to give a unique meaning of
this expression by
\begin{multline}
  \label{eq:ewald1}
  \mathcal{E}_{\mEwald}(\urvN)
  \coloneqq \frac{1}{2} \sideset{}{^{\prime}}\sum_{\substack{\nv\\1\leq i,j\leq N}}
  \Li \, \Lj \left( \frac{\erfc(\alpha^{1/2}\abs{\rijn})}{\abs{\rijn}} \right) \\
  + \frac{1}{2\pi V} \sum_{\mv \neq \vc 0} \frac{\exp(-\pi^2\mv^2/\alpha)}{\mv^2}\oS(\mv)\oS(-\mv)
  +  \mathcal{E}_{\mself}(\urvN),
\end{multline}
where \( \alpha \) is a positive real number,
\begin{equation}\label{eq:struct_fourier}
  \oS(\mv) \coloneqq \sum_{1\leq j \leq N} \Fc(\Lj)(\mv) \exp(2\pi \ii \mv \cdot \rv_{j})
\end{equation}
is the structure factor and \( \Fc \) is the discrete \np{Fourier} transform of the operator
\( \Lj \).
For example for a point-multipoles up to order $p=2$ (quadrupoles), $\Fc$ reads
\[
	\Fc(\Lj)(\mv)
	=
	\vs M^{0}_{j}
	+
	2\pi  \ii \vs M^{1}_{j} \cdot \mv
	-
	(2\pi)^2 \vs M^{2}_{j} \cdot M,
\]
with $M_{\gamma\gamma'} = m_\gamma m_{\gamma'}$.
The third term in~\eqref{eq:ewald1} is commonly referred to as the self-energy.
In the realm of polarizable force-fields, the commonly used definition of the self-energy is
the one from~\citet{smith98}.

The fundamental property of the \Ewald\ energy is that it is independent on the order of
summation due of the absolute convergence of the involved sums.

It can be shown\cite{deleeuw80, darden08} that the interaction energy~\eqref{eq:multipoles}
of the system is related to the \Ewald\ energy through the relation
\begin{equation}
  \label{eq:energy}
  \mathcal{E}(\urvN) = \mathcal{E}_{\mEwald} + J(\vs D, \vs M),
\end{equation}
where the surface term \( J(\vs D, \vs M) \) depends on the dipolar moment
\( \vs D = \sum_{1\leq i\leq N} \vs M^{0}_{i} \, \rv_{i} \) and the sum of dipoles
\( \vs M = \sum_{1\leq i\leq N} \vs M^{1}_{i} \) of the primitive cell \( U \).
Only this term is responsible for the order of summation in equation~\eqref{eq:multipoles}, it reflects the macroscopic shape of the system (see the upcoming Remark~\ref{rem:ms} for a discussion on the notion of macroscopic shape).
The order of summation of the conditionally convergent series is therefore a factor to
choose in order to specify the exact value of the interaction energy \( \mathcal{E}(\urvN) \) and 
%The order of summation 
is often supposed to be spherical (by shells of \( \nv \) such that
\( \abs{\nv} \) is increasing).

By supposing that the macroscopic system is surrounded by a continuum dielectric
with some dielectric permittivity \( \varepsilon \), the interaction of the microscopic
system with the continuum can be taken into account and explicitly dealt with for spherical
summation orders.
Further, in the limiting case of a perfect conductor \( \varepsilon=\infty \) as surrounding
environment (and still with spherical summation order), it can be proven that the
surface term vanishes\cite{de1980simulation,darden08}.
This model is called the tinfoil model.
In consequence, this implies that the energy of the system is in this case the \Ewald\
energy.

In this paper, we do not longer comment on the convergence issues, which will be subject of a forthcoming paper, and concentrate on the proper definition of the self-energy \( \mathcal{E}_{\mself}(\urvN) \), which requires some
subtle development if general multipoles are considered that go beyond the results for
point charges.%, which, up to our knowledge, have never been carefully written down.

More precisely, there are two aspects that we address in the work.
First, we investigate a mathematically clean derivation of the self-potential (and thus of
the energy thanks to equation~\eqref{eq:en_fld}) and self-field when general multipoles are
considered and not only point-charges.
We then deduce thereof the expression of the self-energy.
Second, we present the proofs which demonstrates that these quantities are mathematically
well-defined.

\section{Derivation of the potential}\label{sec:derivation-pot}

First, we revisit the derivation of the \Ewald\ summation for the potential generated by the
multipoles.
The conditionally convergent series in \eqref{eq:en_fld} defining the potential $\phi^j$ is given a precise meaning by considering the limit
\[
	\lim_{k\to \infty}
    \sideset{}{^{\prime}}\sum_{\substack{\nv\in \Omega(P,k) \\ 1\leq i\leq N}}
  	\Li \frac{1}{\abs{\rv_{i} - \rv_{j} + \nv}},
\]
for some domain $P$ in \( \R^{3} \)  containing the origin that represents the macroscopic shape of the system (see Remark~\ref{rem:ms}) and where
\begin{equation}
  \Omega(P, k) \coloneqq
  \Set*{\nv = \sum_{1 \leq \gamma \leq 3} n_\gamma \av_\gamma}%
  {{(n_\gamma)}_{1 \leq \gamma \leq 3} \in \Z^3, \; \frac{\nv}{k} \in P}.
\end{equation}
At the base of the derivation of the potential is the splitting
\begin{equation}
  \label{eq:distanceSplit}
  \frac{1}{\abs{\rv}}
  = \frac{\erfc(\param \abs{\rv})}{\abs{\rv}}
  + \frac{1}{\pi} \sum_{\mv} \int_{U^*} \frac{\exp(-\pi^2 \abs{\vv + \mv }^2 / \alpha)}{\abs{\vv + \mv}^2}
  \exp(-2i\pi(\vv + \mv) \cdot \rv) \, \dd[3] \vv,
\end{equation}
for any positive \( \alpha \) and which can be deduced\cite{darden08} from the integral expression of the gamma function at the point \( \frac12 \) for all \( \rv \) but at the origin.

Using the present splitting and following the arguments presented in \citet{darden08} (Sections 3.5.2.3.2 and 3.5.2.3.1), one can define
\begin{equation}
  \label{eq:zeta}
  \begin{split}
    \zeta_{k}(\rv) 
%    \coloneqq& 
%    \sum_{\nv\in \Omega(P,k)} \frac{1}{\abs{\rv + \nv}} \\
    =& \sum_{\nv\in \Omega(P,k)} \frac{\erfc(\param \abs{\rv + \nv})}{\abs{\rv + \nv}}
    + \frac{1}{\pi V} \sum_{\mv \neq 0} \frac{\exp(-\pi^2 |\mv|^2 / \alpha)}{|\mv|^2}\exp(-2i\pi \mv \cdot \rv)
    	\\
  	&\qquad 
	- \frac{\pi}{\alpha V} + H_{k}(\rv), %+ o(1).
  \end{split}
\end{equation}
such that 
\[
	\sum_{\nv\in \Omega(P,k)} \frac{1}{\abs{\rv + \nv}}
	= 
	\zeta_{k}(\rv)  + o(1),
\]
as $k\to \infty$ and which consists of potential at \( \rv \) that is generated by unit point charges located at the vertices of the lattice indexed by $\nv$ such that \(\nv\in  \Omega(P,k) \).
For sake of completeness, we outline this step in Appendix A where we also give the definition of $H_{k}(\rv)$ in Eqn. \eqref{eq:DefHk}.
%for all \( \rv \in U \) different from the origin.
%where the term \( H_{k}(\rv) \) is not further specified and gives rise to the so-called
%surface term, see \cite{darden08}.
Based on \( \zeta_{k} \), we now introduce the function
\begin{equation}
	\label{eq:defphi}
 	\Phi_{k} (\rv) \coloneqq \sum_{1\leq j\leq N} \Lj \zeta_{k}(\rv - \rv_{j}),
\end{equation}
defined everywhere but at the location of the point multipoles located in \( U \).
The function \( \rv \mapsto \Phi_{k} (\rv) \) represents the potential at
\( \rv \in U \) generated by all the multipoles and their images contained in periodic lattice cells indexed by $\nv$ such that
\(\nv\in  \Omega(P,k) \).

In consequence, the limit as \( \rv \) tends to any point multipole \( \rv_j \) is not
finite.
Note that this has been handled above with the $^\prime$ sign after the sum since only the interaction energy is considered.
Instead, if one considers the potential at position \( \rv \) generated by all multipoles
except the multipole located at \( \rv_i \), then one has to substract the contribution for
\( \nv=0 \) in equation~\eqref{eq:zeta} for \( \zeta_{k}(\rv - \rv_{i}) \) to get
\begin{equation}
  \label{eq:potentialrj}
  \phi_{k}^{i}(\rv) = \left( \Phi_{k}(\rv) - \Li \frac{1}{\abs{\rv-\rv_{i}}} \right),
\end{equation}
with finite limit at \( \rv_{i} \) given by
\begin{equation}
  \phi_{k}^{i}(\rv_{i})
  = \lim_{\rv \to \rv_{i}}\phi_{k}^{i}(\rv)
  = \lim_{\rv \to \rv_{i}} \left( \Phi_{k}(\rv) - \Li\frac{1}{\abs{\rv-\rv_{i}}} \right).
\end{equation}
The function \( \rv\mapsto \phi_{k}^{i}(\rv) \) denotes the potential at an arbitrary
position \( \rv \) generated by all multipoles 
contained in periodic lattice cells indexed by $\nv$ such that
\(\nv\in  \Omega(P,k) \)
except multipole \( i \) in the unit cell ($\nv=0$).

Hence, using the splitting introduced in equation~\eqref{eq:zeta} combined with~\eqref{eq:defphi} and~\eqref{eq:potentialrj}, it follows that

\begin{multline}
  \label{eq:potentielr}
  \phi_{k}^{i}(\rv)
  = \sideset{}{^{\prime}}\sum_{\substack{\nv\in \Omega(P,k)\\1\leq j \leq N}} \Lj \frac{\erfc(\param \abs{\rv_{j\nv}})}{\abs{\rv_{j\nv}}}
  + \frac{1}{\pi V} \sum_{\mv \neq 0} \frac{\exp(-\pi^2 |\mv|^2 / \alpha)}{|\mv|^2}\exp(-2\pi \ii \mv \cdot \rv)\oS(\mv) \\
  - \left( \Li \frac{\erf(\param \abs{\rv-\rv_{i}})}{\abs{\rv-\rv_{i}}} \right)
  + \sum_{1 \leq j \leq N}\Lj H_{k}(\rv-\rv_{j}),
\end{multline}
%\new{le passage de (10) a (14) est un peu trop rapide}
with \( r_{j\nv} \coloneqq \abs{\rv-\rv_{j}+\nv} \) and where \( \oS(\mv) \) was defined in
equation~\eqref{eq:struct_fourier}.

We have therefore introduced the splitting (precisely defined in the listing below)
%\new{trois termes issus de 4.. ca fait pas mal de combinaison possibles.. c'est peut etre necessaire de preciser}
\begin{equation}
  \label{eq:potentiel_f}
  \phi_{k}^{i}(\rv) = \phi_{0,k}^{i}(\rv) + \phi_{\mself}^{i}(\rv)+ \phi_{k,{\msurf}}^{i}(\rv),
\end{equation}
where each individual term is defined and discussed in the following. 
%We raise now two questions. First, we know that the limit $\rv\to \rv_j$ for $\phi_{k}^{j}(\rv)$ is well-posed, but is it well-posed too for each of the individual terms?
%Second, is the limit $k\to \infty$ of the first and third term well-posed?

\begin{description}
\item [The absolutely converging part of the potential] Due to their quick convergence in $k\to\infty$ the
first two terms in equation~\eqref{eq:potentielr}, denoted by \( \phi_{0,k}^{i} \)
in equation~\eqref{eq:potentiel_f}, do not depend on the order of summation.
The first term is called the direct potential and the second the reciprocal potential.
%Both limits $k\to\infty$ and $\rv\to \rv_j$ of $\phi_{0,k}^{j}(\rv)$ are well-posed.

\item [The self-potential] The third term in equation~\eqref{eq:potentielr} is what we call
the self potential \( \phi_{\mself}^{i}(\rv) \) in~\eqref{eq:potentiel_f} and
does not depend on the other nuclear position \( \rv_{i} \), \( i\neq j \), and $k$, and is non-constant in $\rv$ around $\rv_i$.
This term being independent on the other nuclear positions can by no means model the
interaction potential, hence the name.

From the derivation it becomes clear that in the limit \( \rv\to \rv_{i} \), \( \phi_{\mself}^{i}(\rv_i) \) is the quantity to
be subtracted from the reciprocal potential in order that the potential at \( \rv=\rv_{i} \)
is the potential created by all other multipoles except multipole \( i \).
Note that the contribution in the direct space has already been taken into account in
equation~\eqref{eq:potentielr} since the sign \( {}^{\prime} \) appears on the first sum.

We will provide in Section \ref{sec:self} explicit values of this terms in limit $\rv\to \rv_i$ for arbitrary multipolar distributions. 

\item [The surface-potential] The fourth term~\eqref{eq:potentielr}, denoted by
\( \phi_{k,\msurf}^{i} \) in equation~\eqref{eq:potentiel_f}, is the
surface potential which will be well-defined only if the sum converges as \( k \) tends to
infinity.
It is intimately linked with the order of summation and is related to subtle questions. We want to focus in this article to  the self-terms and are therefore assuming convergence in $k$ here.
\end{description}

\begin{rem}\label{rem:ms}

It is not very intuitive to understand what is meant by the macroscopic shape of the system
and its environment, and how this is mathematically accounted for.
%Let us elaborate this a bit further.
From the microscopic viewpoint, the sequence of shapes~\( {(\Omega(P,k))}_{k\in\N} \) should be
seen as the scaling of one macroscopic shape~\( P \), i.e.~\( \Omega(P,k) = k P \).
Then, the sequence \( \Omega(P,k) \) covers larger and larger parts of the microscopic
space~\( \mathbb R^3 \) as \( k \) increases.
We would like to advocate also the viewpoint of introducing a change of variables from the
microscopic variable~\( \nv \) to the macroscopic variable~\( \hat \nv = \nv/k \) that can be used
to rewrite sums of the form
\begin{equation}
  \sum_{\nv\in \Omega(P,k) \cap \mathbb Z^3} f(\nv)
  \quad\text{as}\quad
  \sum_{\hat \nv\in P\cap ( \mathbb Z^3/k) } \hat f( \hat \nv),
  \quad \text{with \( \hat f( \hat \nv) \coloneqq f( k\hat \nv)\).}
\end{equation}
This means that the microscopic space contracts more and more within the macroscopic
shape~\( P \), see Figure~\ref{fig:macro} for an illustration.
The role of the macroscopic shape~\( P \) becomes  visible and the exterior of~\( P \) is
then the surrounding environment to~\( P \).
\end{rem}

\begin{figure}[htbp!]
  \centering
  \subcaptionbox{Microscopic space}{%
    \includegraphics[width=0.48\textwidth]{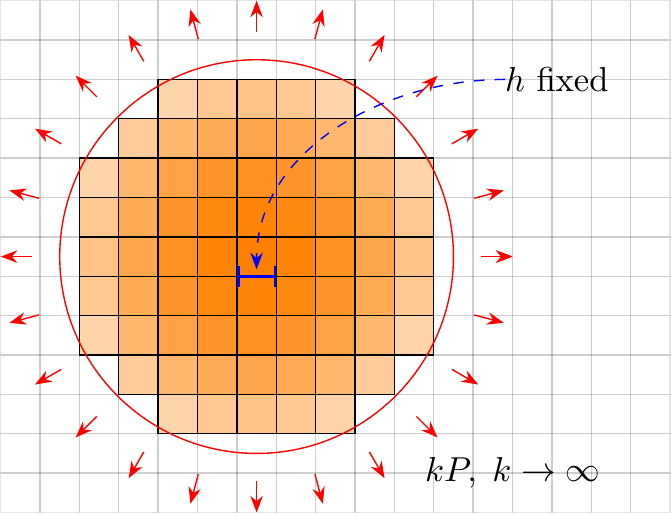}
  }
  \subcaptionbox{Macroscopic space}{%
    \includegraphics[width=0.48\textwidth]{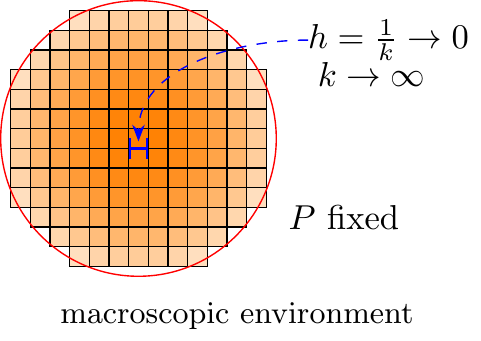}
  }
  \caption{The limiting process \( k\to\infty \) observed in the microscopic and macroscopic
    space}\label{fig:macro}
\end{figure}

In the following, we introduce
\begin{equation}
  \phi_0^i(\rv) = \lim_{k\to\infty} \phi_{0,k}^i(\rv), \qquad
  \phi_{{\msurf}}^i(\rv) = \lim_{k\to\infty} \phi_{k,{\msurf}}^i(\rv),
\end{equation}
where we have assumed that the second term converges as we want to study the self-terms.

Thus
\begin{equation}
	\label{eq:PotSplit}
  \phi^i(\rv)= \phi_{0}^i(\rv) + \phi_{\mself}^i(\rv) + \phi_{{\msurf}}^i(\rv),
\end{equation}
denotes the potential at position \( \rv \) generated by all multipoles except multipole
\( i \) in the unit cell.
Recall that the term \( \phi_{\msurf}^i(\rv) \) depends on the order of
summation represented by a particular shape $P$, whereas the other terms \( \phi_{0,k}^i(\rv) \) and 
\( \phi_{{\mself}}^i(\rv) \) do not.
%, and is supposed to converge here.
%In this case the limit \( \rv\to\rv_{j} \) is well-defined.

\section{Derivation of the field}\label{sec:derivation-field}

The derivation we have given for the potential gives a straightforward one for the field.
It is based on the splitting of \( \phi_{k}^{i}(\rv) \) developed in the previous section
and uses the fact that the electric field is minus the gradient of the electric potential.
Indeed, taking the derivative \( \Dif_{\rv} \) (thus with respect to \( \rv \)) in
equation~\eqref{eq:potentielr} yields
\begin{equation}
  \label{eq:div-chp}
  \begin{split}
    \vc \fld_{k}^{i}(\rv) &= - \Dif_{\rv} \phi_{k}^{i}(\rv) \\
    &= -\sideset{}{^{\prime}}\sum_{\mathclap{}\substack{\nv \in \Omega(P,k) \\ 1\leq j \leq N}} \Dif_{\rv}\Lj \frac{\erfc(\param\abs{\rv_{j\nv}})}{\abs{\rv_{j\nv}}}
    - \frac{1}{\pi V} \sum_{\mathclap{\mv \neq 0}} \frac{\exp(-\pi^2 \mv^2 / \alpha)}{\mv^2}
    \oS(\mv)\Dif_{\rv}\!\exp(-2\pi  \ii \mv \cdot \rv) \\
    &\qquad
    + \Dif_{\rv}\Li\frac{\erf(\param \abs{\rv-\rv_{i}})}{\abs{\rv-\rv_{i}}} 
    - \sum_{\mathclap{\substack{1 \leq j \leq N}}} \Dif_{\rv} \Lj H_{k}(\rv-\rv_{j}).
  \end{split}
\end{equation}
Therefore, \( \vc \fld_{k}^{i}(\rv) \) denotes the electric field at a general position \( \rv \)
generated by all multipoles in a cell belonging to \( \Omega(P,k) \) except multipole \( i \)
in the unit cell.

In consequence, we define each term individually as for the potential:
\begin{subequations}
  \begin{align}
    \begin{split}
      \fld_{0,k}^i(\rv)
     & = -\sideset{}{^{\prime}}\sum_{\mathclap{}\substack{\nv \in \Omega(P,k) \\1\leq j \leq N}} \Dif_{\rv}\Lj \frac{\erfc(\param\abs{\rv_{j\nv}})}{\abs{\rv_{j\nv}}}\\
      &\qquad\qquad- \frac{1}{\pi V} \sum_{\mathclap{\mv \neq 0}} \frac{\exp(-\pi^2 \mv^2 / \alpha)}{\mv^2}
     \oS(\mv) \Dif_{\rv}\!\exp(-2\pi \ii \mv \cdot \rv),
    \end{split}\\
    \fld_{\mself}^i(\rv)
    &= \Dif_{\rv}\Li \frac{\erf(\param \abs{\rv-\rv_{i}})}{\abs{\rv-\rv_{i}}},\\
    \fld_{{\msurf},k}^i(\rv)
    &= - \sum_{\mathclap{\substack{1 \leq j \leq N}}} \Dif_{\rv} \Lj H_{k}(\rv-\rv_{j}).
  \end{align}
\end{subequations}

In particular, we defined the self electric field \( \fld_{\mself}^i(\rv) \) as the third term in~\eqref{eq:div-chp}, which will be shown in  Section~\ref{sec:self} to be well-defined, in particular at \( \rv_i \), and give explicit expressions.
Evaluating \( \vc \fld_{k}^{i}(\rv) \) at \( \rv=\rv_{i} \) then yields
\begin{equation}
  \label{eq:field_f}
  \fld_{k}^{i}(\rv_{i}) =
  \fld_{0,k}^{i}(\rv_{i}) + \fld_{\mself}^{i}(\rv_{i}) + \fld_{k,{\msurf}}^{i}(\rv_{i}).
\end{equation}

Using classical results from convergence of series, we obtain that as soon as the
surface-potential converges (in the limit \( k\to\infty \)) and the surface-field
converges uniformly in \( \rv \) in a neighborhood of \( \rv_i \), 
%the gradient of its limit is exactly the surface-field:
the gradient of the limit of the surface-potential is exactly the surface-field:
\begin{equation}
  \fld_{\msurf}^{i}(\rv)
 = - \lim_{k\to\infty}\sum_{\mathclap{\substack{1 \leq j \leq N}}} \Dif_{\rv} \Lj H_{k}(\rv-\rv_{j})
  = - \Dif_{\rv} \left( \lim_{k\to\infty}\sum_{\mathclap{\substack{1 \leq j \leq N}}} \Lj H_{k}(\rv-\rv_{j}) \right)
  = - \Dif_{\rv} \left( \phi_{\msurf}^{i}(\rv) \right).
\end{equation}
Again, this is a subtle question related to the convergence in $k$ that will be addressed in an upcoming work.
The focus of this article is shed on the self-terms.

\section{Derivation of the energy}\label{sec:derivation-energy}

Recalling equation~\eqref{eq:Equiv} combined with the splitting~\eqref{eq:PotSplit} of the potential into different parts,
we define the following energy contributions
\begin{subequations}
  \begin{align}
    \mathcal{E}_{0}(\urvN)
    &= \frac12 \sum_{1\leq i \leq N} \sum_{0\leq k \leq p} \vs M^{k}_{i} \cdot \big(  \Dif^{k}_{\rv } \phi_0^i(\rv)\big)\big|_{\rv=\rv_i},\\
    \label{eq:SelfEn}
    \mathcal{E}_{\mself}(\urvN)
    &= \frac12\sum_{1\leq i\leq N} \sum_{0\leq k \leq p} \vs M^{k}_{i} \cdot \big(  \Dif^{k}_{\rv } \phi_\mself^i(\rv)\big)\big|_{\rv=\rv_i},\\
    \mathcal{E}_{\msurf}(\urvN)
    &= \frac12 \sum_{1\leq i \leq N} \sum_{0\leq k \leq p} \vs M^{k}_{i} \cdot \big(  \Dif^{k}_{\rv } \phi_\msurf^i(\rv)\big)\big|_{\rv=\rv_i}.
  \end{align}
\end{subequations}
Note that the self-potential is non-constant in a neighborhood of $\rv_i$ so that the higher multipolar moments, i.e. the derivatives, act on the self-potential $\phi_\mself^i$. 
Further, notice that $\mathcal{E}_{0}(\urvN)$ can be written as
\begin{multline}
	\mathcal{E}_{0}(\urvN) = \frac12 \sum_{1\leq i \leq N}  \Li \phi_0^i(\rv_i)
	= 
	\frac{1}{2}
	\sideset{}{^{\prime}}\sum_{\mathclap{}\substack{\nv \\ 1\leq i,j \leq N}} \Li \Lj \frac{\erfc(\param\abs{\rijn})}{\abs{\rijn}}
	\\
	+
	\frac{1}{2\pi V} \sum_{\mv \neq \vc 0} \frac{\exp(-\pi^2\mv^2/\alpha)}{\mv^2}\oS(\mv)\oS(-\mv),
\end{multline}
and in consequence, we write
\begin{equation}
  \mathcal{E}(\urvN)
  = \underbrace{\mathcal{E}_{0}(\urvN) + \mathcal{E}_{\mself}(\urvN)}_{\mathcal{E}_{\mEwald}(\urvN)}
  + \underbrace{\mathcal{E}_{\msurf}(\urvN)}_{J(\vs D,\vs M)}
  = \mathcal{E}_{\mEwald}(\urvN) +J(\vs D,\vs M).
\end{equation}
Note that we confirm with this derivation equation~\eqref{eq:energy} and that the \Ewald\ energy and the self-energy do not
depend on the order of summation whereas the surface energy does.

The corresponding force-terms then naturally result from differentiating the different
energies with respect to the nuclear coordinates.
In particular, as we will see further below, the self-energy is independent on any nuclear
coordinate and the self-energy therefore doesn't induce any force term.
However, the correct term of the self-field is mandatory in the context of polarizable
force-fields.\cite{Lagardere2015ScalableEwald}

\section{Well-posedness of the self-terms}\label{sec:self}

In this part, we outline the proofs that the self-potential, field and \eqref{eq:SelfEn} are well-defined in the limit $\rv\to\rv_j$ and in consequence also the self-energy.
As done by \citet{smith98}, we introduce recursively the
functions \( B_n \) for any \( n \in \N \) and all \( r \in \R_{+}\setminus\{0\} \) by
\begin{equation}
	\label{eq:defBn}
  \begin{split}
    B_{0}(r) &\coloneqq - \frac{\erf(\param r)}{r} \\
    B_n(r) &\coloneqq \frac{1}{r^2} \left( (2n-1) B_{n-1}(r) + \frac{{(2\alpha)}^n}{\sqrt{\alpha\pi}} \exp(-\alpha r^{2}) \right).
  \end{split}
\end{equation}

Then, the following result holds.
\begin{thm}
\label{thm}
For any \(  n \in \N \), there holds that
\begin{equation}
  \label{eq:self}
  \lim_{r \to 0} B_n(r) = B_n(0) = -\frac{\alpha^{n+1/2}}{\sqrt\pi} \frac{2^{n+1}}{2n+1},
\end{equation}
and
\begin{equation}
	\label{eq:self2}
  \frac{\dd B_{n}}{\dd r}(r) = -r B_{n+1}(r).
\end{equation}
\end{thm}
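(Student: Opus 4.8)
The plan is to establish a single closed-form integral representation for $B_n$ and then read off both claims from it. Concretely, I would prove that for every $n\in\N$ and every $r>0$,
\[
  B_n(r) = -\frac{2^{n+1}}{\sqrt\pi}\int_0^{\param} u^{2n}\exp(-u^2 r^2)\,\dd u,
\]
and that the right-hand side extends continuously to $r=0$. The motivation is the classical identity $\operatorname{erf}(\param r)/r = \frac{2}{\sqrt\pi}\int_0^{\param}\exp(-u^2 r^2)\,\dd u$ (obtained from the definition of $\operatorname{erf}$ by the substitution $t=ur$), which is exactly the $n=0$ case of the representation, matching $B_0(r)=-\operatorname{erf}(\param r)/r$.

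The representation itself I would prove by induction on $n$, the base case being the erf identity just mentioned. For the inductive step, assume the formula holds for $B_{n-1}$ and insert it into the defining recursion~\eqref{eq:defBn}. The term $r^2B_n(r)=(2n-1)B_{n-1}(r)+\tfrac{(2\alpha)^n}{\sqrt{\alpha\pi}}\exp(-\alpha r^2)$ has to be matched by integrating $-\frac{2^{n+1}}{\sqrt\pi}\int_0^{\param} r^2 u^{2n}\exp(-u^2 r^2)\,\dd u$ by parts in $u$: writing $u^{2n}=u^{2n-1}\cdot u$ and using $\int u\exp(-u^2 r^2)\,\dd u=-\tfrac{1}{2r^2}\exp(-u^2 r^2)$, the boundary contribution at $u=\param$ produces precisely $\tfrac{(2\alpha)^n}{\sqrt{\alpha\pi}}\exp(-\alpha r^2)$ (the boundary term at $u=0$ vanishes for $n\ge 1$) while the remaining integral produces precisely $(2n-1)B_{n-1}(r)$. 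Hence the closed form satisfies the same recursion and initial value as $B_n$, so the two coincide for all $n$.

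Both assertions then follow immediately. On the compact interval $[0,\param]$ the integrand $u^{2n}\exp(-u^2 r^2)$ is continuous and dominated by $u^{2n}$ uniformly in $r$, so the integral is continuous in $r$ up to $r=0$; therefore $\lim_{r\to0}B_n(r)=B_n(0)=-\frac{2^{n+1}}{\sqrt\pi}\int_0^{\param}u^{2n}\,\dd u=-\frac{2^{n+1}}{\sqrt\pi}\,\frac{\alpha^{n+1/2}}{2n+1}$, which is~\eqref{eq:self}. Differentiating under the integral sign --- legitimate because $\partial_r\big(u^{2n}\exp(-u^2 r^2)\big)=-2u^{2n+2}r\exp(-u^2 r^2)$ is jointly continuous on $[0,\param]\times\R$ --- gives $\frac{\dd B_n}{\dd r}(r)=-\frac{2^{n+1}}{\sqrt\pi}\int_0^{\param}\!\!\big(-2u^{2n+2}r\big)\exp(-u^2 r^2)\,\dd u=-r\Big(-\frac{2^{n+2}}{\sqrt\pi}\int_0^{\param}u^{2(n+1)}\exp(-u^2 r^2)\,\dd u\Big)=-rB_{n+1}(r)$, which is~\eqref{eq:self2}.

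The main obstacle is the bookkeeping in the inductive step: one must check that the integration by parts reproduces the coefficients $2n-1$ and $\tfrac{(2\alpha)^n}{\sqrt{\alpha\pi}}$ exactly, in particular correctly tracking the half-integer powers of $\alpha$ that enter through the upper limit $\param=\alpha^{1/2}$ and through $(\param)^{2n-1}=\alpha^{n-1/2}$. A minor but worthwhile point to make explicit is that $B_n(0)$ in the statement is only defined through the limit (the recursion~\eqref{eq:defBn} carries a factor $1/r^2$), so part of what the representation buys us is that this limit exists at all --- which the integral form renders transparent. An alternative route avoiding the integral would be to prove~\eqref{eq:self2} first, directly by induction, by differentiating~\eqref{eq:defBn} and substituting the $n\mapsto n+1$ instance of the recursion, and then deduce~\eqref{eq:self} by l'H\^opital on the $0/0$ quotient; but one still needs an a priori argument (such as the integral representation) that the limit defining $B_n(0)$ exists before l'H\^opital may be applied.
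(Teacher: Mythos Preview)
Your proposal is correct and takes a genuinely different, and considerably cleaner, route than the paper. The paper proceeds in four separate lemmas: first it derives an explicit closed form for $B_n$ (a finite sum in $r^{-2k}$ multiplying $\exp(-\alpha r^2)$, plus an $\erf$ term), then expands both $\exp$ and $\erf$ as power series, then proves a nontrivial combinatorial identity (via the beta-function integral and the binomial theorem) to show that all coefficients of negative powers of $r$ cancel, and finally reads off the zeroth coefficient to get $B_n(0)$. The derivative relation~\eqref{eq:self2} is handled separately by a direct induction on the recursion~\eqref{eq:defBn}. Your single integral representation $B_n(r)=-\tfrac{2^{n+1}}{\sqrt\pi}\int_0^{\param}u^{2n}\exp(-u^2r^2)\,\dd u$, established once by the integration-by-parts induction you describe, collapses all of this: existence of the limit and its value fall out by dominated convergence (or just uniform continuity on a compact interval), and the derivative law is one line of differentiation under the integral sign. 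What the paper's approach buys is an explicit power-series expansion of $B_n$ about $r=0$ (potentially useful elsewhere) and an interesting combinatorial byproduct in Lemma~\ref{lem:basicId}; what your approach buys is brevity, transparency of the mechanism (the well-posedness at $r=0$ is manifest rather than emerging from a delicate cancellation), and both conclusions from a single object. Your bookkeeping in the inductive step checks out: the boundary term at $u=\param$ indeed yields $\tfrac{2^n\alpha^{n-1/2}}{\sqrt\pi}\exp(-\alpha r^2)=\tfrac{(2\alpha)^n}{\sqrt{\alpha\pi}}\exp(-\alpha r^2)$ as required.
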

The proof of Theorem~\ref{thm} is presented in Appendix B.
% Section~\ref{ssec:proofs}.

%%% here

%\subsection{The self-terms}

In order to give explicit formulae for the self-terms, we first note that from \eqref{eq:self2} follows
\begin{equation}
  \Dif_i^1 B_{n}(\abs{\rv - \rv_{i}}) = (\rv-\rv_i) B_{n+1}(\abs{\rv - \rv_{i}}).
\end{equation}
Since we have derived the values of \( B_n(0) \) in \eqref{eq:self}, we can give explicit
formulae for the self-potential, the self-field and the self-energy in consequence.

For sake of a simple presentation, we consider a multipolar charge distribution up to
quadrupoles in the following.
Intrinsically, the quadrupolar moments \(\vs M_i^2 \) are symmetric matrices with zero trace.

\begin{description}
\item [The self-potential] 
Therefore, the \( i \)-th self-potential \( \phi_{\mself} \) at an arbitrary
point \( \rv \) in a neighborhood of \( \rv_i \) writes as
\begin{subequations}
  \begin{align}
    \phi_{\mself}^i(\rv)
    &= -\Li \frac{\erf(\param \abs{\rv-\rv_{i}})}{\abs{\rv-\rv_{i}}}
      = \Li B_0(\abs{\rv-\rv_{i}}) \\
    &=  	\vs M^0_i  \, B_0(\abs{\rv-\rv_{i}})
      	+	\vs M^1_i \cdot \Dif^1_i  \, B_0(\abs{\rv-\rv_{i}})
      	+ 	\vs M^2_i \cdot \Dif^2_i  \, B_0(\abs{\rv-\rv_{i}}) \\
    \begin{split}
      &= 
      		\vs M^0_i  \, B_0(\abs{\rv-\rv_{i}}) 
      	+	B_1(\abs{\rv-\rv_{i}})   \, \vs M^1_i \cdot (\rv-\rv_{i}) \\
      &\qquad\qquad + B_2(\abs{\rv-\rv_{i}})  \, \vs M^2_i \cdot \big({(\rv-\rv_{i})}^\top (\rv-\rv_{i}) \big).
    \end{split}
  \end{align}
\end{subequations}

%\subsubsection{The self-potential}
Then, the evaluation of the \(i \)-th self-potential at \( \rv=\rv_i \) is  given by
\begin{equation}
  \phi^i_{\mself}(\rv_{i})
  = \lim_{\rv \to  \rv_{i} } \phi_{\mself}^i(\rv)
  = \vs M^0_i \, B_0(0)
  = -\vs M^0_i  \, 2 \sqrt{\frac{\alpha}{\pi}} ,
\end{equation}
which does no longer depend on \( \rv_{i} \), only depends on the charge \(\vs M_i^0 \) and
is well-defined.
This formula is of course valid for any kind of multipolar distribution and not restricted
to orders to up to quadrupoles only.
Note that the self-potential is not constant in a neighborhood of $\rv_i$ in this derivation.

\item[The self-field]
We want to stress that in contrast to what is presented in~\citet{nymand00}, there is  indeed a non-zero self-contribution to the electric field. 
The $i$-th part of the self-field at \( \rv=\rv_i \) is defined by
\begin{equation}
  \fld_{\mself}^i(\rv)
  = - \lim_{\rv\to \rv_i} \left(\Dif_{\rv} \phi^i_{\mself}(\rv) \right)
  = - \vs M^1_i   \, B_1(0)
  = \vs M^1_i \, \sqrt{\frac{\alpha}{\pi}}\frac{4\alpha}{3},
\end{equation}
which only depends on the dipole moment at site \( i \) and is also valid for any kind of
multipolar distribution.

\item[The self-energy]
Finally, the self energy as defined above writes as
%\begin{align}
%  \mathcal{E}_{\mself}(\urvN)
%  & = \frac12\sum_{1\leq j\leq N}\big( \Lj \phi_{\mself}^j(\rv)\big)\big|_{\rv=\rv_j} \\
%  & = \frac12\sum_{1\leq j\leq N}\left(
%    \vs M^0_j\cdot \vs M^0_j  \, B_0(0) - \vs M^1_j\cdot \vs M^1_j  \, B_1(0) + 2  \, \vs M^2_j \cdot \vs M^2_j \, B_2(0)
%    \right) \\
%  & = - \sqrt{\frac{\alpha}{\pi}} \sum_{1\leq j\leq N}\left(
%    \vs M^0_j\cdot \vs M^0_j - \frac{2\alpha}{3} \vs M^1_j\cdot \vs M^1_j + \frac{8\alpha^2}{5}   \vs M^2_j \cdot \vs M^2_j
%  \right).
%\end{align}
\begin{align}
  \mathcal{E}_{\mself}(\urvN)
%  & = \frac12\sum_{1\leq j\leq N} \Lj \phi_{\mself}^j(\rv_j) \\
  & = \frac12\sum_{1\leq i\leq N} \sum_{0\leq k \leq 2} \vs M^{k}_{j} \cdot \big(  \Dif^{k}_{\rv } \phi_\mself^i(\rv)\big)\big|_{\rv=\rv_i},
  \\
  & = \frac12\sum_{1\leq i\leq N}\left(
    \vs M^0_i\cdot \vs M^0_i  \, B_0(0) + \vs M^1_i\cdot \vs M^1_i  \, B_1(0) + 2  \, \vs M^2_i \cdot \vs M^2_i \, B_2(0)
    \right) \\
  & = - \sqrt{\frac{\alpha}{\pi}} \sum_{1\leq i\leq N}\left(
    \vs M^0_i\cdot \vs M^0_i + \frac{2\alpha}{3} \vs M^1_i\cdot \vs M^1_i + \frac{8\alpha^2}{5}   \vs M^2_i \cdot \vs M^2_i
  \right).
\end{align}
\black 

We recognize the current practice that for relative energies and forces, the correct term of
the self-energy is not needed since a constant misfit cancels out in energy differences.
However, for sake of having a complete theory based on a rigorous development, we think that
it is important to state the self-energy as well.
\end{description}

%Comparing with~\citet{nymand00}, which deals with point-charges and dipoles, we observe that we obtain the same result for the self-energy and the self-contribution to the electric potential. However, the self-contribution to the electric field is zero in~\citet{nymand00}. 
%Comparing with~\citet{toukmaji00}, we recognize that the expression of the self-energy and the self-contribution to the electric field matches our results for point-charges and dipoles. The contribution to the self-contribution to the electric potential is not provided.

\section{Conclusion}
In that paper, we proposed a new mathematically clean and coherent derivation of the \Ewald\ summation
for a system consisting of \( N \)\tiret{}body electrostatic interaction with multipolar
charges of any order.
The existing results in the literature differ between different authors and no common
development of all quantities can be found.
The essential differences lie in the self-term expressions.
We presented a clean derivation and confirm the expressions proposed by \citet{smith98} for which we proved
well-posedness.
Our model is derived from a clean application of the \Ewald\ splitting to the electric
potential and the subsequent quantities such as the electric field, the energy and the
forces are derived thereof.
%In particular, expressions of the self-terms for any multipoles were given.
A complete derivation of all these quantities is mandatory in the context of next generation
polararizable force-fields where in particular the self-field is required and needs to be
consistent with the theory.

%Our approach is based on developing the Ewald sum for the electric potential, and then
%derive from there the electric field, energy and forces of the system.
Overall, the new model which is mathematically sound maintains the use of the tinfoil model
and provides simpler expressions for the self-energy that are closer to the original idea of
\Ewald\ to work on the potential and not on the energy.

\section*{acknowledgements}
This work was made possible thanks to the French state funds managed by
the CalSimLab LABEX and the ANR within the Investissements d’Avenir
program (reference ANR-11-IDEX-0004-02) and through support of the
Direction Générale de l’Armement (DGA) Maîtrise NRBC of the French
Ministry of Defense.

Benjamin Stamm acknowledges the funding from the German Academic Exchange Service (DAAD)
from funds of the ``Bundesministeriums f$\ddot{\mbox u}$r Bildung und Forschung'' (BMBF) for the project Aa-Par-T (Project-ID 57317909).

Yvon Maday and Etienne Polack acknowledge the funding from the PICS-CNRS (Project ${\text N}^{\circ}$ 230509) and the PHC PROCOPE 2017 (Project ${\text N}^{\circ}$ 37855ZK).

\bibliography{self}

\section*{Appendix A}
\setcounter{equation}{0}
\renewcommand\theequation{A\arabic{equation}}
Essentially, for sake of a complete presentation we present here the derivation of \eqref{eq:zeta} in a compact way following the arguments presented in \citet{darden08}, see also \cite{Smith1981,Smith1994}.
As briefly mentioned in Section~\ref{sec:derivation-pot}, we start with the following splitting
\begin{equation}
  \label{eq:distanceSplit2}
  \frac{1}{\abs{\rv}}
  = 
  \frac{\erfc(\param \abs{\rv})}{\abs{\rv}}
  + \frac{1}{\pi} \sum_{\mv} \int_{U^*} \frac{\exp(-\pi^2 \abs{\vv + \mv }^2 / \alpha)}{\abs{\vv + \mv}^2}
  \exp(-2 \pi \ii (\vv + \mv) \cdot \rv) \, \dd[3] \vv,
\end{equation}
see Eq. (3.5.2.16) in \citet{darden08}.
Then, 
%$\zeta_k$ defined by \eqref{eq:defzeta} can be written as
one can write
\begin{align*}
	%\zeta_{k}(\rv) 
    %&=
    \sum_{\nv\in \Omega(P,k)} \frac{1}{\abs{\rv + \nv}}	
    &=
    \sum_{\nv\in \Omega(P,k)} \frac{\erfc(\param \abs{\rv+\nv})}{\abs{\rv+\nv}}
   % \\
  %	&\hspace{2cm}
  + 
   	\sum_{\nv\in \Omega(P,k)}
  	\sum_{\mv} 
  	\int_{U^*} h_{\mv,\rv}(\vv) \exp(-2 \pi \ii \vv \cdot \nv) \, \dd[3] \vv
\end{align*}
with
\[
	h_{\mv,\rv}(\vv) 
	= 
	\frac{\exp(-\pi^2 \abs{\vv + \mv }^2 / \alpha)}{\pi\abs{\vv + \mv}^2}
  \exp(-2 \pi \ii (\vv + \mv) \cdot \rv),
\]
and where we used that
\[
	\exp(-2 \pi \ii (\vv + \mv) \cdot \nv) 
	=
	\exp(-2 \pi \ii \vv  \cdot \nv) 
\]
since $ \mv \cdot \nv\in \mathbb N$. 

\noindent
{\bf Case $\mv\neq 0$:}
We first recognize that 
\[
	\frac{1}{V}\widehat h_{\mv,\rv}(\nv)
	=
	\int_{U^*} 
  	h_{\mv,\rv}(\vv)
  	\exp(-2 \pi \ii\vv  \cdot \nv) \, \dd[3] \vv
\]
where $\widehat h_{\mv,\rv}$ denotes the Fourier coefficient of $ h_{\mv,\rv}$ and we recall that $V= \frac{1}{|U^*|}$.
Then, there holds that 
\[
	\sum_{\nv} \widehat h_{\mv,\rv}(\nv) =  h_{\mv,\rv}(0)
	= \frac{\exp(-\pi^2 \abs{\mv }^2 / \alpha)}{\pi\abs{\mv}^2}
  \exp(-2 \pi \ii\mv \cdot \rv)
\]
and thus 
\[
	\sum_{\nv\in \Omega(P,k)} \widehat h_{\mv,\rv}(\nv)
	= h_{\mv,\rv}(0) + o(1),
\]
as $k\to\infty$.

\noindent
{\bf Case $\mv = 0$:}
As visible from above, this development does not hold for $\mv=0$ and is more subtle.
We have that
\[
	h_{\mv,\rv}(\vv) 
	= 
	\frac{\exp(-\pi^2 \abs{\vv}^2 / \alpha)}{\pi\abs{\vv}^2}
  \exp(-2 \pi \ii \vv  \cdot \rv)
\]
and the combination of two Taylor expansions yields
\begin{align*}
	h_{\mv,\rv}(\vv) 
	&= 
	\frac{1}{\pi\abs{\vv}^2}
	\left( 1- \frac{\pi^2}{\alpha} \abs{\vv}^2 +\mathcal O(|\vv|^4)\right)
	\left( 1-2\pi \ii \vv\cdot \rv - 2\pi^2\abs{\vv\cdot\rv}^2 + \mathcal O(\abs{\vv}^3) \right)
  %\exp(-2i\pi\vv  \cdot \rv)
  \\
	&= 
	\frac{ 1-2\pi \ii\vv\cdot \rv - 2\pi^2\abs{\vv\cdot\rv}^2 }{\pi\abs{\vv}^2}
%	 \exp(-2i\pi\vv  \cdot \rv)
	 - 
	\frac{\pi}{\alpha} 
  %\exp(-2i\pi\vv  \cdot \rv)
  + 
  \mathcal O(|\vv|).
\end{align*}
%First, similar as above, using the notation $f_{\rv}(\vv) = \exp(-2i\pi\vv  \cdot \rv)$ we can develop
%\[
%	- 
%	\frac{\pi}{\alpha} 
%	\sum_{\nv}
%  	\int_{U^*} 
%  	f_{\rv}(\vv) 
%  	\exp(-2i\pi\vv  \cdot \nv) \, \dd[3] \vv
%  	=
%	- 
%	\frac{\pi}{\alpha V} 
%	\sum_{\nv}
%  	\widehat f_{\rv}(\nv)
%  	=
%  	- 
%	\frac{\pi}{\alpha V}  f_{\rv}(0)
%	=
%	-
%	\frac{\pi}{\alpha V} .
%\]
%Second, using another Taylor expansion yields
%\[
%	\frac{1}{\pi\abs{\vv}^2}
%	 \exp(-2i\pi\vv  \cdot \rv)
%	 =
%	 \frac{\left( 1-2\pi \vv\cdot \rv - 2\pi^2\abs{\vv\cdot\rv}^2  \right)}{\pi\abs{\vv}^2}
%	 + \mathcal O(\abs{\vv})
%\]
This motivates the definition
\begin{equation}
	\label{eq:DefHk}
	H_k(\rv) 
	= 
	\sum_{\nv\in \Omega(P,k)} 
	\int_{U^*} 	
	\frac{1-2\pi \ii \vv\cdot \rv - 2\pi^2\abs{\vv\cdot\rv}^2  }{\abs{\vv}^2} 
	\exp(-2 \pi \ii \vv  \cdot \nv )
	 \, \dd[3] \vv,
\end{equation}
and note that 
\[
	- 
	\frac{\pi}{\alpha} 
	\sum_{\nv}
  	\int_{U^*} 
  	1
  	\exp(-2\pi \ii \vv  \cdot \nv) \, \dd[3] \vv
  	=
	- 
	\frac{\pi}{\alpha V} 
	\sum_{\nv}
  	\widehat 1(\nv)
  	=
  	- 
	\frac{\pi}{\alpha V} .
\]
Then 
\[
	- 
	\frac{\pi}{\alpha} 
	\sum_{\nv\in \Omega(P,k)}
  	\int_{U^*} 
  	\exp(-2 \pi \ii \vv  \cdot \nv) \, \dd[3] \vv
  	=
  	- 
	\frac{\pi}{\alpha V}  
	+ o(1),
\]
as $k\to\infty$.
Further there holds
\[
	\sum_{\nv\in \Omega(P,k)} \int_{U^*} \mathcal O(|\vv|) \exp(-2 \pi \ii \vv  \cdot \nv)\, \dd[3] \vv
	=
	 o(1),
\]
so that combining all terms yields
\begin{align*}
	\zeta_{k}(\rv) 
    &=
    \sum_{\nv\in \Omega(P,k)} \frac{\erfc(\param \abs{\rv+\nv})}{\abs{\rv+\nv}}
   	+
   	\frac{1}{\pi V}
   	\sum_{\mv\neq 0} 
  	\frac{\exp(-\pi^2 \abs{\mv }^2 / \alpha)}{\abs{\mv}^2}
  	\exp(-2\pi\ii\mv \cdot \rv)
  	\\
  	&\qquad 
  	- 
	\frac{\pi}{\alpha V} 
	+
	\frac{1}{V} H_k(\rv) + o(k),
\end{align*}
as $k\to\infty$.
Note that we do not shed emphasis on the different arguments that guarantee existence of the different limits but put rather emphasis on the compact development to derive \eqref{eq:zeta}.

\section*{Appendix B: Mathematical proofs}
\setcounter{equation}{0}
\renewcommand\theequation{B\arabic{equation}}
\label{ssec:proofs}
Before we really tackle the proof of Theorem~\ref{thm}, we first prove some auxiliary results.

\begin{lemma}
For any integer \( n \), the function \( B_n \) is explicitly
given by
\begin{equation}
  \label{eq:Bn1}
  B_n(r)
  = \frac{\exp(-\alpha r^2)}{\sqrt{\alpha\pi} r^2}\sum_{k=0}^{n-1}
  \frac{{(2\alpha)}^{n-k}}{r^{2k}} \frac{(2n-1)!!}{(2(n-k)-1)!!}
  - (2n-1)!! \frac{\erf(\param r)}{r^{2n+1}},
\end{equation}
where \( (2n-1)!! \coloneqq (2n-1)\times\cdots\times 3\times 1 \) %\new{and $(2n)!! = 2^n n!$} 
with the convention that
for any non-positive integer \( k \), \( k!! = 1 \) and that a sum from \( 0 \) to \( -1 \)
is zero.
\end{lemma}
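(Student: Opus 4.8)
The plan is to prove the closed form~\eqref{eq:Bn1} by induction on $n$, working directly from the recursive definition~\eqref{eq:defBn}. For the base case $n=0$ the sum $\sum_{k=0}^{-1}$ is empty and $(2\cdot 0-1)!!=(-1)!!=1$ by the stated conventions, so the right-hand side of~\eqref{eq:Bn1} collapses to $-\erf(\param r)/r = B_0(r)$, which is the asserted expression.

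For the inductive step, suppose~\eqref{eq:Bn1} holds at rank $n-1$ for some $n\ge 1$, and substitute it into $B_n(r)=\frac1{r^2}\bigl((2n-1)\,B_{n-1}(r)+\frac{(2\alpha)^n}{\sqrt{\alpha\pi}}\exp(-\alpha r^2)\bigr)$. I would then track the three resulting pieces. First, the error-function part of $(2n-1)B_{n-1}(r)$ is $-(2n-1)(2n-3)!!\,\erf(\param r)/r^{2n-1}$; using the identity $(2n-1)(2n-3)!!=(2n-1)!!$ together with the overall factor $1/r^2$ it becomes $-(2n-1)!!\,\erf(\param r)/r^{2n+1}$, precisely the error-function term of~\eqref{eq:Bn1} at rank $n$. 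Second, the finite sum inside $(2n-1)B_{n-1}(r)$, after multiplication by $1/r^2$, equals $\frac{\exp(-\alpha r^2)}{\sqrt{\alpha\pi}\,r^2}\sum_{k=0}^{n-2}\frac{(2\alpha)^{n-1-k}}{r^{2(k+1)}}\frac{(2n-1)!!}{(2(n-1-k)-1)!!}$, and the substitution $j=k+1$ rewrites it as the $j=1,\dots,n-1$ terms of the sum in~\eqref{eq:Bn1}. Third, the remaining term $\frac1{r^2}\cdot\frac{(2\alpha)^n}{\sqrt{\alpha\pi}}\exp(-\alpha r^2)=\frac{\exp(-\alpha r^2)}{\sqrt{\alpha\pi}\,r^2}(2\alpha)^n$ is exactly the $j=0$ term of that sum, since $(2n-1)!!/(2n-1)!!=1$. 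Adding the three contributions reproduces~\eqref{eq:Bn1} at rank $n$, which closes the induction.

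There is no genuine obstacle here; the computation is elementary once the bookkeeping is carried out carefully. The points needing attention are the double-factorial arithmetic — in particular the identity $(2n-1)(2n-3)!!=(2n-1)!!$ and the conventions $k!!=1$ for $k\le 0$ and ``empty sum equals zero'' — and the edge case $n=1$, where the sum in~\eqref{eq:Bn1} at rank $n-1=0$ is empty, so $(2n-1)B_{n-1}(r)$ contributes only its error-function term and the single surviving summand of $B_1(r)$ is the remaining $(2\alpha)$-term. Once these are handled, matching the powers of $r$ and of $2\alpha$ on both sides is immediate; all manipulations take place on $r>0$, which is exactly where every term of~\eqref{eq:Bn1} is defined.
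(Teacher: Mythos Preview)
Your proposal is correct and follows essentially the same route as the paper: both argue by induction on the index, verify the base case $n=0$ by inspection of the conventions, and for the inductive step plug the assumed closed form into the recursion~\eqref{eq:defBn}, use $(2n-1)(2n-3)!!=(2n-1)!!$ to update the $\erf$-term, and shift the summation index to recover the exponential sum at the next rank. The only cosmetic difference is that the paper passes from $n$ to $n+1$ while you pass from $n-1$ to $n$.
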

\begin{proof}
The proof follows by induction. For $n=0$ we see that the proposition holds by inspection.
Now, let us assume that~\eqref{eq:Bn1} holds for a given $n$. 
Inserting~\eqref{eq:Bn1}  into the definition of $B_{n+1}$ in (\ref{eq:defBn}) implies
\begin{align*}
r^2 & B_{n+1}(r) =  (2n+1) B_{n}(r) + \frac{{(2\alpha)}^{n+1}}{\sqrt{\alpha\pi}} \exp(-\alpha r^{2}) \\
&= (2n+1) \left(\frac{\exp(-\alpha r^2)}{\sqrt{\alpha\pi} r^2}\sum_{k=0}^{n-1}
  \frac{{(2\alpha)}^{n-k}}{r^{2k}} \frac{(2n-1)!!}{(2(n-k)-1)!!}
  - (2n-1)!! \frac{\erf(\param r)}{r^{2n+1}}\right) \\
  	&\qquad \qquad\qquad + \frac{{(2\alpha)}^{n+1}}{\sqrt{\alpha\pi}} \exp(-\alpha r^{2}) \\
&=  \frac{\exp(-\alpha r^2)}{\sqrt{\alpha\pi}} \left( {(2\alpha)}^{n+1} + \sum_{k=1}^{n}
  \frac{{(2\alpha)}^{n+1-k}}{r^{2k}} \frac{(2n+1)!!}{(2(n+1-k)-1)!!}\right)
  - (2n+1)!! \frac{\erf(\param r)}{r^{2n+1}} \\
&=  \frac{\exp(-\alpha r^2)}{\sqrt{\alpha\pi}}  \sum_{k=0}^{n}
  \frac{{(2\alpha)}^{n+1-k}}{r^{2k}} \frac{(2n+1)!!}{(2(n+1-k)-1)!!}
  - (2n+1)!! \frac{\erf(\param r)}{r^{2n+1}} 
\end{align*}
\end{proof}

\begin{lemma}
The functions \( B_n \) can be rewritten for all positive \( r \) as
\begin{multline}
  \label{eq:Bn2}
 B_n(r)
  = \frac{1}{\sqrt{\alpha\pi}}
  \sum_{\ell=0}^\infty
  \sum_{k=0}^{n-1}
  \frac{{(-1)^\ell \alpha}^{\ell+n-k} 2^{n-k}(2n-1)!!}{\ell!(2(n-k)-1)!!} 
  r^{2(\ell-k-1)}
  \\
   \quad 
  - 2 (2n-1)!! \sqrt{\frac{\alpha}{\pi}} \sum_{\ell=0}^\infty\frac{(-\alpha)^\ell }{(2\ell+1)\ell!} r^{2(\ell -n )}.
\end{multline}
\end{lemma}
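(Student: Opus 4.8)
The plan is to read off \eqref{eq:Bn2} directly from the closed form \eqref{eq:Bn1} of the preceding lemma by substituting the Maclaurin expansions of the only two non-elementary functions appearing there, the Gaussian and the error function. Concretely, I would use the entire-function expansions
\begin{gather*}
  \exp(-\alpha r^2) = \sum_{\ell=0}^\infty \frac{(-\alpha)^\ell}{\ell!}\, r^{2\ell},
  \\
  \erf(\param r) = \frac{2}{\sqrt\pi}\sum_{\ell=0}^\infty \frac{(-1)^\ell\,\alpha^{\ell+1/2}}{(2\ell+1)\,\ell!}\, r^{2\ell+1},
\end{gather*}
both convergent for every $r\in\R$. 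Plugging the second series into the term $-(2n-1)!!\,\erf(\param r)/r^{2n+1}$ of \eqref{eq:Bn1}, and using $(-1)^\ell\alpha^{\ell+1/2}=(-\alpha)^\ell\sqrt\alpha$, immediately gives $-2(2n-1)!!\sqrt{\alpha/\pi}\sum_{\ell\ge0}\frac{(-\alpha)^\ell}{(2\ell+1)\ell!}\,r^{2(\ell-n)}$, which is the second line of \eqref{eq:Bn2}. Plugging the first series into the remaining term and carrying the prefactor $r^{-2}r^{-2k}$ inside the $\ell$-sum produces $\frac{1}{\sqrt{\alpha\pi}}\sum_{\ell\ge0}\sum_{k=0}^{n-1} c_{k,\ell}\,r^{2(\ell-k-1)}$; expanding $(2\alpha)^{n-k}(-\alpha)^\ell=(-1)^\ell 2^{n-k}\alpha^{\ell+n-k}$ one reads off $c_{k,\ell}=\frac{(-1)^\ell\alpha^{\ell+n-k}2^{n-k}(2n-1)!!}{\ell!\,(2(n-k)-1)!!}$, matching the first line of \eqref{eq:Bn2}.

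The interchange of the (finite) sum over $k$ with the (absolutely convergent) sum over $\ell$ is legitimate, and this is essentially the only point needing a word of care: for each fixed $r>0$ only finitely many of the resulting monomials $r^{2(\ell-k-1)}$ and $r^{2(\ell-n)}$ carry a strictly negative exponent, so the right-hand side of \eqref{eq:Bn2} is a well-defined Laurent-type series, which is exactly why the statement is restricted to $r>0$. Everything else is bookkeeping of the $2$-powers, the $\alpha$-powers and the double factorials $(2n-1)!!/(2(n-k)-1)!!$; I do not expect any genuine obstacle here.

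As a self-contained alternative that avoids invoking \eqref{eq:Bn1}, one can run an induction on $n$ parallel to the proof of the previous lemma: the base case $n=0$ follows from $B_0(r)=-\erf(\param r)/r$ and the error-function series, and for the induction step one inserts the ansatz \eqref{eq:Bn2} into the recursion \eqref{eq:defBn}, shifts $\ell\mapsto\ell-1$ in the contribution coming from the added Gaussian term, and matches the coefficient of each power $r^{2m}$. I would nonetheless favour the first route, since the closed form \eqref{eq:Bn1} already does the hard combinatorial work and only the two standard power series remain to be substituted.
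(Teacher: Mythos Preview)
Your proposal is correct and follows essentially the same approach as the paper: the paper's proof consists precisely in inserting the Maclaurin series of $\exp(-\alpha r^2)$ and $\erf(\param r)$ into the closed form~\eqref{eq:Bn1}, with no further detail given. Your write-up is in fact more explicit than the paper's, since you verify the coefficient matching and comment on the interchange of sums and the restriction to $r>0$.
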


\begin{proof}
The result is obtained by inserting the expression of the error and exponential functions as a power series, i.e.,
\[
	\exp(-\alpha r^2)
	=
	\sum_{\ell=0}^\infty \frac{(-\alpha)^\ell r^{2\ell}}{\ell!}
	\qquad\mbox{and}\qquad
	\erf(\param r)
	=
	2\sqrt{\frac{\alpha}{\pi}} \sum_{\ell=0}^\infty\frac{(-\alpha)^\ell r^{2\ell+1}}{(2\ell+1)\ell!},
\]
in equation~\eqref{eq:Bn1}. 
%We therefore obtain
%\begin{multline}
%% \lefteqn{
% B_n(r)
% %}\\
%  = \frac{1}{\sqrt{\alpha\pi}}
%  \sum_{\ell=0}^\infty
%  \sum_{k=0}^{n-1}
%  \frac{{(-1)^\ell \alpha}^{\ell+n-k} 2^{n-k}(2n-1)!!}{\ell!(2(n-k)-1)!!} 
%  r^{2(\ell-k-1)}
%  \\
%   \quad % \qquad % \qquad
%  - 2 (2n-1)!! \sqrt{\frac{\alpha}{\pi}} \sum_{\ell=0}^\infty\frac{(-\alpha)^\ell }{(2\ell+1)\ell!} r^{2(\ell -n )}
%\end{multline}
\end{proof}

We now need to prepare some result that is used in a later proof.
\begin{lemma}
\label{lem:basicId}
For any $n\in \mathbb N_0$, there holds
\[
 \sum_{k=1}^{n+1}
  \frac{(-2)^{k} }{(n-k+1)!(2k-1)!!} 
+ \frac{2}{(2n+1)n!} 
=
0.
\]
\end{lemma}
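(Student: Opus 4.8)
The plan is to read the sum off as a coefficient in a product of two formal power series and then identify that product by a one-line ordinary differential equation. Set $a_k := (-2)^k/(2k-1)!!$ (with the convention $(-1)!!=1$, so $a_0=1$), let $g(x):=\sum_{k\ge 0} a_k x^k$, and put $G(x):=e^{x}g(x)$. Writing $P(m)$ for the coefficient of $x^m$ in $G$, the Cauchy product gives $P(m)=\sum_{k=0}^{m} a_k/(m-k)!$; peeling off the $k=0$ term of $P(n+1)$, the left‑hand side of the claimed identity is exactly $P(n+1)-\frac{1}{(n+1)!}+\frac{2}{(2n+1)\,n!}$. So it suffices to prove $P(m)=-\frac{1}{(2m-1)\,m!}$ for every $m\ge 0$ and then to carry out one rational simplification.

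The next step is the differential equation for $g$. From $(2k-1)!!=(2k-1)(2k-3)!!$ together with $2k=(2k-1)+1$ one gets, for every $k\ge 1$, the elementary relation $2k\,a_k=-2a_{k-1}+a_k$; summing $2k\,a_k x^k$ over $k\ge 1$ then yields
\[
  2x\,g'(x)=-2x\,g(x)+g(x)-1 .
\]
Substituting $g=e^{-x}G$, which turns $e^{x}g'$ into $G'-G$, and multiplying through by $e^{x}$, the two terms $\pm 2xG$ cancel and the equation collapses to
\[
  2x\,G'(x)-G(x)+e^{x}=0 .
\]

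Comparing the coefficient of $x^m$ on the two sides of this last identity gives $(2m-1)P(m)+\frac{1}{m!}=0$ for every $m\ge 0$ (for $m=0$ this reads $-P(0)+1=0$, consistent with $P(0)=1$), hence $P(m)=-\frac{1}{(2m-1)\,m!}$. Substituting into the expression of the first paragraph, a one‑line arithmetic simplification using $-1-(2n+1)=-2(n+1)$ gives $P(n+1)-\frac{1}{(n+1)!}+\frac{2}{(2n+1)\,n!}=0$, which is the assertion. Everything takes place in the ring of formal power series, so no convergence question arises; the only spots that need care — and where a slip would be easy — are the $(-1)!!$ convention in the $k=0,1$ terms, the base case $m=0$, and the sign in $2k\,a_k=-2a_{k-1}+a_k$, the single relation on which the whole argument rests. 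An alternative, equally short route rewrites the terms via $1/(2k-1)!!=2^{k}k!/(2k)!$, so that after multiplication by $(n+1)!$ the sum becomes $\sum_{k}\binom{n+1}{k}(-4)^{k}/\binom{2k}{k}$, and then uses $1/\binom{2k}{k}=(2k+1)\int_0^1 (t(1-t))^{k}\,dt$ to evaluate it by an elementary integral after the substitution $s=1-2t$.
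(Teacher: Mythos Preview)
Your proof is correct. The recurrence $(2k-1)a_k=-2a_{k-1}$ gives precisely the ODE $2xg'=-2xg+g-1$, the substitution $g=e^{-x}G$ cleanly cancels the $-2xG$ terms to yield $2xG'-G+e^x=0$, and reading off the $x^m$-coefficient gives $P(m)=-1/((2m-1)m!)$; the final arithmetic with $m=n+1$ closes the argument exactly as you say.

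Your route is genuinely different from the paper's. The paper proves the companion identity
\[
\sum_{k=0}^n \frac{(-1)^k(2k)!!}{(2k+1)!!}\binom{n}{k}=\frac{1}{2n+1}
\]
by writing $(2k)!!/(2k+1)!!$ as the beta integral $\int_0^1(1-t^2)^k\,dt$ and then summing under the integral sign via the binomial theorem; a final index shift and multiplication by $-2$ recovers the stated lemma. In other words, the paper's proof is precisely the ``alternative route'' you sketch in your last sentence. What your generating-function argument buys is that it is purely algebraic (formal power series only, no integral) and mechanical once the Cauchy-product structure $P(m)=\sum a_k/(m-k)!$ is spotted; what the paper's integral approach buys is a direct closed-form evaluation in one line once the beta representation is in hand, without having to discover or solve an ODE. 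Both are of comparable length.
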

\begin{proof}
We denote by $\Gamma$ the usual gamma-function. Introduce as well the double-factorial for even numbers $(2n)!! =2n \times  (2n-2) \times \ldots \times 2$ and observe that the following identities hold
\begin{align*}
(2k)!! &= 2^k \, k! = 2^k \, \Gamma(k+1),\\
(2k+1)!! &= \frac{(2k+2)!}{(2k+2)!!}
=\frac{(2k+2)!}{2^{k+1}(k+1)!}= 2^{-k-1}\frac{\Gamma(2k+3)}{\Gamma(k+2)}.
\end{align*}
In consequence, there holds
\begin{equation}
	\label{eq:dev1}
	\frac{(2k)!!}{(2k+1)!!}
	=
	2^{2k+1}\frac{\Gamma(k+1)\Gamma(k+2)}{\Gamma(2k+3)} = 2^{2k+1}B(k+1,k+2)
	=
	2^{2k+1} \int_0^1 t^{k+1} (1-t)^k dt,
\end{equation}	
where $B(\cdot,\cdot)$ denotes the beta-function.
Next, we observe that 
\begin{equation}
	\label{eq:dev2}
	2\int_0^1  t \big(t^{k} (1-t)^k \big)dt = \int_0^1 t^{k} (1-t)^k dt,
\end{equation}	
by exploiting the change of variable $s=1-t$ and the fact that $t(1-t)=s(1-s)$.
Further using the identity $4t(1-t) = 1-(2t-1)^2$ and another change of variable $s=2t-1$ yields
\begin{equation}
	\label{eq:dev3}
	4^k
	\int_0^1 t^{k} (1-t)^k dt
	=
	\int_0^1 \big( 1-(2t-1)^2 \big)^k dt
	=
	\frac12
	\int_{-1}^1 \big( 1-s^2 \big)^k ds
	=
	\int_{0}^1 \big( 1-s^2 \big)^k ds.
\end{equation}	
Combining \eqref{eq:dev1}--\eqref{eq:dev3} then yields
\begin{equation}
	\label{eq:dev4}
	\frac{(2k)!!}{(2k+1)!!}
	=
	\int_{0}^1 \big( 1-t^2 \big)^k dt.
\end{equation}
Now, we use \eqref{eq:dev4} in combination with the the binomial coefficient theorem as follows
\begin{equation}
	\label{eq:dev5}
	\sum_{k=0}^{n}
  \frac{(-1)^{k} (2k)!! }{(2k+1)!!} 
  \frac{n!}{(n-k)!k!}
  	= 
  	\int_{0}^1 \sum_{k=0}^{n}
  \frac{n!}{(n-k)!k!} \big( t^2 -1 \big)^k dt
   =
  	\int_{0}^1 t^{2n} dt
  	=
  	\frac{1}{2n+1}.
\end{equation}
Then replacing the double factorial $(2k)!!=2^k k!$ and shifting the index $k$ by one yields the desired result.
%\[
%-\frac{2}{(2n+1)n!}
%=
%\sum_{k=1}^{n+1}
%  \frac{(-1)^{k} 2^{k}  }{(n-k+1)!(2k-1)!!}.
%\]
\end{proof}

The first claim of Theorem~\ref{thm} is formulated in the following lemma.
\begin{lemma}
For any \(  n \in \N \), there holds that
\begin{equation}
  \label{eq:self3}
  \lim_{r \to 0} B_n(r) = B_n(0) = -\frac{\alpha^{n+1/2}}{\sqrt\pi} \frac{2^{n+1}}{2n+1}.
\end{equation}
\end{lemma}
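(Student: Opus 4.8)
The plan is to extract the limit $B_n(0)$ directly from the convergent power-series representation of $B_n$ established in the second lemma, equation~\eqref{eq:Bn2}. The key observation is that, although each of the two double sums in \eqref{eq:Bn2} individually contains negative powers of $r$ (terms $r^{2(\ell-k-1)}$ with $\ell\le k$ in the first sum, and $r^{2(\ell-n)}$ with $\ell<n$ in the second), these singular contributions must cancel against each other, since $B_n(r)$ is by construction finite for $r>0$ and we expect a finite limit. So first I would collect, for each fixed nonnegative integer $j$, the total coefficient of $r^{2j-2}$ (equivalently, isolate the $r^{-2}, r^0, r^2,\dots$ contributions). The constant term ($j=1$, i.e.\ the coefficient of $r^0$) is what survives in the limit $r\to 0$ and gives $B_n(0)$; all terms with $j<1$ must be shown to have vanishing coefficient.

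Concretely, in the first double sum of \eqref{eq:Bn2} the power $r^0$ arises when $\ell = k+1$, contributing $\sum_{k=0}^{n-1}\frac{(-1)^{k+1}\alpha^{n+1/2}2^{n-k}(2n-1)!!}{(k+1)!(2(n-k)-1)!!}\cdot\frac{1}{\sqrt{\alpha\pi}}$ after absorbing the $\alpha^{\ell+n-k}=\alpha^{n+1}$ with the $1/\sqrt{\alpha\pi}$ prefactor; in the second sum the power $r^0$ arises when $\ell = n$, contributing $-2(2n-1)!!\sqrt{\alpha/\pi}\,\frac{(-\alpha)^n}{(2n+1)n!}$. Then I would reindex the first of these (shifting $k\mapsto k-1$ so the sum runs $k=1$ to $n$, or equivalently writing $j=n-k$) so that it matches the combinatorial shape of Lemma~\ref{lem:basicId}. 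After pulling out the common factor $\frac{\alpha^{n+1/2}}{\sqrt\pi}(2n-1)!!\,2^{n}$ (or similar), the bracket that remains should be, up to the normalization $(2n-1)!!\cdot$(something), exactly the combination $\sum_{k=1}^{n+1}\frac{(-2)^k}{(n-k+1)!(2k-1)!!} + \frac{2}{(2n+1)n!}$ — wait, not quite: Lemma~\ref{lem:basicId} gives zero, but here we want a nonzero answer, so the constant-term coefficient must instead be a \emph{near-miss} of that identity, and the cleanest route is to verify that the singular coefficients (powers $r^{-2k}$, $k\ge 1$) vanish by Lemma~\ref{lem:basicId} (suitably reindexed for each $k$), while the $r^0$ coefficient is evaluated by the Beta-function identity \eqref{eq:dev4}–\eqref{eq:dev5} applied with a shifted exponent, yielding $-\frac{\alpha^{n+1/2}}{\sqrt\pi}\frac{2^{n+1}}{2n+1}$.

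The step I expect to be the main obstacle is the bookkeeping: matching the index ranges and factorial/double-factorial factors in \eqref{eq:Bn2} against the normalized form in Lemma~\ref{lem:basicId}, including the convention that $k!!=1$ for non-positive $k$ and that empty sums vanish, and making sure the $n=0$ case (where the first sum is empty and $B_0(r)=-\erf(\param r)/r$ manifestly tends to $-2\sqrt{\alpha/\pi}=-\frac{\alpha^{1/2}}{\sqrt\pi}\frac{2}{1}$) is consistent with the general formula. A subsidiary subtlety is justifying the term-by-term passage to the limit $r\to0$ in \eqref{eq:Bn2}: one should note that after the cancellation of singular terms the remaining series is an ordinary power series in $r^2$ with infinite radius of convergence (being built from $\exp$ and $\erf$), so interchanging limit and summation is legitimate. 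Once the cancellations are in hand, the evaluation of the surviving constant is a direct application of the already-proved Beta-integral identity, so no genuinely new estimate is needed — the work is entirely in the algebra of reindexing.
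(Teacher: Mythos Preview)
Your proposal is correct and follows essentially the same route as the paper: both start from the series \eqref{eq:Bn2}, show that the coefficients of the negative powers $r^{2(q-n)}$ with $q<n$ vanish by Lemma~\ref{lem:basicId}, and then evaluate the $r^0$ coefficient as the ``near-miss'' of that identity (the sum truncated at $s=n$ instead of $n+1$, leaving precisely the missing term $-(-2)^{n+1}/(2n+1)!!$). The paper carries this out via the single change of indices $\ell=q-s+1$, $k=n-s$, which merges both sums of \eqref{eq:Bn2} into one expression in $q$ and makes the application of Lemma~\ref{lem:basicId} immediate; your plan of collecting the coefficient of each power $r^{2j-2}$ separately amounts to the same computation and needs no separate appeal to the Beta integral.
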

\begin{proof}
It remains only to study the limit as \( r \) tends to zero in \eqref{eq:Bn2} of the terms where
\( \ell-n \) and \( \ell-k-1 \) are nonpositive, as positive powers of \( r \) will converge to
zero.

In order to have a clear picture, we first reorder the sums over $\ell$ and $k$ of the first term
%$\ell\ge0$, $k=0,\ldots,n-1$ 
in \eqref{eq:Bn2} by introducing the following change of indices $\ell=q-s+1$ and $k=n-s$ as follows
\begin{align*}
  \sum_{\ell=0}^\infty
  \sum_{k=0}^{n-1}
  S(\ell,k)
  &=
  \sum_{q=0}^\infty
  \sum_{s=1}^{\min(n,q+1)}
  S(q-s+1,n-s)
  \\
  &=
  \sum_{q=0}^{n-1}
  \sum_{s=1}^{q+1}
  S(q-s+1,n-s)
	+
  \sum_{q=n}^\infty
  \sum_{s=1}^{n}
  S(q-s+1,n-s)
\end{align*}
where $S(\ell,k)$ denotes the summand of the first term in \eqref{eq:Bn2}.
The second term in \eqref{eq:Bn2} is modified by the change of indices $\ell=q-s+1$ and $k=n-s$ resulting in the following expression
\begin{multline}
  B_n(r)
  = 
  \frac{1}{\sqrt{\alpha\pi}}
  \sum_{q=0}^{\infty}
  \sum_{s=1}^{\min(n,q+1)}
  \frac{{(-1)^{q-s+1} \alpha}^{q+1} 2^{s}(2n-1)!!}{(q-s+1)!(2s-1)!!} 
  r^{2(q-n)}
  \\
   \quad 
  - 2 (2n-1)!! \sqrt{\frac{\alpha}{\pi}} \sum_{q=0}^\infty\frac{(-\alpha)^q}{(2q+1)q!} r^{2(q -n )}
  \\
 = 
  -(2n-1)!!\sqrt{\frac{\alpha}{\pi}}
  \sum_{q=0}^{\infty}
  (-\alpha)^q
  \left(
  \sum_{s=1}^{\min(n,q+1)}
  \frac{(-2)^{s} }{(q-s+1)!(2s-1)!!} 
	+\frac{2}{(2q+1)q!} 
	\right) r^{2(q -n )} .
\end{multline}
As outlined in the beginning of the proof, we focus on the non-negative powers of $r$, thus for non-negative $q-n$. 
The coefficient for such a non-negative power $q-n$ of $r$ is given by
\[
 \sum_{s=1}^{q+1}
  \frac{(-2)^{s} }{(q-s+1)!(2s-1)!!} 
+ \frac{2}{(2q+1)q!} 
\]
which vanishes by Lemma~\ref{lem:basicId}.
This proves well-posedness of the limit $r\to 0$ and the limit is given by the coefficient of the zero-th power in $r$, i.e. for $q=n$. Note that $\min(n,q+1)=n=q-1$ and we apply once again Lemma ~\ref{lem:basicId} to obtain the desired limit.
\end{proof}

The second claim of Theorem~\ref{thm} is formulated in the following lemma.
\begin{lemma}
For any \(  n \in \N_0 \), there holds that
\begin{equation}
	\label{eq:drB}
  \frac{\dd B_{n}}{\dd r}(r) = -r B_{n+1}(r).
\end{equation}
\end{lemma}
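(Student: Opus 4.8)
The plan is to prove the identity by induction on $n$, working from the recursive definition~\eqref{eq:defBn} rather than from the closed forms~\eqref{eq:Bn1}--\eqref{eq:Bn2}; this keeps the computation to a few lines and sidesteps any question of differentiating a power series term by term. Throughout, $r>0$ is fixed, so division by $r^2$ is harmless and all the functions involved are smooth.

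For the base case $n=0$, I would differentiate $B_0(r)=-\erf(\param r)/r$ directly, using $\frac{\dd}{\dd r}\erf(\param r)=\frac{2\param}{\sqrt\pi}\exp(-\alpha r^2)$, to obtain $\frac{\dd B_0}{\dd r}(r)=\frac{\erf(\param r)}{r^2}-\frac{2\param}{\sqrt\pi\,r}\exp(-\alpha r^2)$. On the other side, \eqref{eq:defBn} gives $-r\,B_1(r)=-\frac1r\bigl(B_0(r)+\frac{2\alpha}{\sqrt{\alpha\pi}}\exp(-\alpha r^2)\bigr)=\frac{\erf(\param r)}{r^2}-\frac{2\alpha}{\sqrt{\alpha\pi}\,r}\exp(-\alpha r^2)$, and the two expressions agree because $\frac{2\alpha}{\sqrt{\alpha\pi}}=\frac{2\param}{\sqrt\pi}$.

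For the inductive step, suppose $\frac{\dd B_{n-1}}{\dd r}(r)=-r\,B_n(r)$. Multiplying~\eqref{eq:defBn} by $r^2$ and differentiating in $r$ yields
\[
2r\,B_n(r)+r^2\,\frac{\dd B_n}{\dd r}(r)=(2n-1)\,\frac{\dd B_{n-1}}{\dd r}(r)-\frac{(2\alpha)^{n+1}r}{\sqrt{\alpha\pi}}\exp(-\alpha r^2).
\]
Inserting the induction hypothesis and moving the $2r\,B_n(r)$ term to the right-hand side, the right-hand side becomes $-r\bigl((2n+1)B_n(r)+\frac{(2\alpha)^{n+1}}{\sqrt{\alpha\pi}}\exp(-\alpha r^2)\bigr)$, which is precisely $-r\cdot r^2\,B_{n+1}(r)$ by~\eqref{eq:defBn} with $n$ replaced by $n+1$. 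Dividing through by $r^2$ gives $\frac{\dd B_n}{\dd r}(r)=-r\,B_{n+1}(r)$, completing the induction.

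The argument is almost entirely mechanical; the only points calling for a little attention are reconciling the scalar prefactors in the base case (recognizing $\frac{2\alpha}{\sqrt{\alpha\pi}}=\frac{2\param}{\sqrt\pi}$) and, in the inductive step, identifying the collected bracket as exactly $r^2\,B_{n+1}(r)$ rather than an expression needing further manipulation. I do not foresee any genuine obstacle, and in particular no analytic subtlety arises since $r$ stays strictly positive throughout.
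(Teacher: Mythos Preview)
Your proof is correct and follows essentially the same route as the paper: induction on $n$ using the recursive definition~\eqref{eq:defBn}, with the base case checked by direct differentiation of $B_0$. The only cosmetic difference is that you multiply~\eqref{eq:defBn} through by $r^2$ before differentiating (and index the step from $n-1$ to $n$), whereas the paper differentiates the quotient form of $B_{n+1}$ directly; the underlying computation is identical.
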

\begin{proof}
This proof follows by induction. 
For $n=0$, the claim can easily be proven by inspection using the definition \eqref{eq:defBn}.
Consider now the recursive definition of $B_{n+1}$ in \eqref{eq:defBn} and deriving the expression with respect to $r$ yields
\begin{align*}
\frac{\dd B_{n+1}}{\dd r}(r) 
=
-\frac{2}{r} B_{n+1} 
+\frac{1}{r^2}
\left(
(2n+1) \frac{\dd B_{n}}{\dd r}(r) 
-
r
\frac{{(2\alpha)}^{n+2}}{\sqrt{\alpha\pi}} \exp(-\alpha r^{2}) 
\right).
\end{align*}
Assuming that \eqref{eq:drB} holds for $n$ and applying once again the definition \eqref{eq:defBn}  of $B_{n+2}$ implies
\begin{align*}
\frac{\dd B_{n+1}}{\dd r}(r) 
=
-\frac{(2n+1)}{r} B_{n+1} -\frac{1}{r}
\frac{{(2\alpha)}^{n+1}}{\sqrt{\alpha\pi}} \exp(-\alpha r^{2}) 
= - r B_{n+2}(r)
\end{align*}
which completes the proof by induction.
\end{proof}
\end{document}